\newtheorem{observation}[theorem]{Observation}
\newtheorem{claim}[theorem]{Claim}
\newcommand{\lref}[2][]{\hyperref[#2]{#1~\ref*{#2}}}
\newcommand{\ALG}{\textsc{TC}\xspace}
\newcommand{\OPT}{\textsc{Opt}\xspace}
\newcommand{\last}{\textrm{last}}
\newcommand{\size}{\textrm{size}}
\newcommand{\req}{\textrm{req}}
\newcommand{\cnt}{\textrm{cnt}}
\newcommand{\val}{\textrm{val}}
\newcommand{\degree}{\textrm{deg}}
\newcommand{\beP}{\textrm{begin}(P)}
\newcommand{\enP}{\textrm{end}(P)}
\newcommand{\F}{\mathcal{F}}
\newcommand{\kALG}{k_\textnormal{ONL}}
\newcommand{\kOPT}{k_\textnormal{OPT}}
\newcommand{\pin}{\textnormal{\textsc{in}}\xspace}
\newcommand{\pout}{\textnormal{\textsc{out}}\xspace}
\newcommand{\VOPT}{V_\textnormal{OPT}}
\newcommand{\VOPTC}{V_\textnormal{OPT}^\textrm{c}}
\keywords{online algorithms, competitive analysis, caching, routers, software-defined networking, forwarding information base}
\begin{document}

\title{Online Tree Caching}
\titlenote{M.~Pacut and A.~Spyra were supported by Polish National Science Centre grant
DEC-2013/09/B/ST6/01538, M.~Bienkowski by Polish National Science Centre grant
2016/22/E/ST6/00499, and S.~Schmid by Aalborg University's talent management
program. }

\author{Marcin Bienkowski} 
\affiliation{%
  \institution{Institute of Computer Science}
  \department{University of Wroc{\l}aw}
  \country{Poland}
}
\author{Jan Marcinkowski}
\affiliation{%
  \institution{Institute of Computer Science}
  \department{University of Wroc{\l}aw}
  \country{Poland}
}
\author{Maciej Pacut}
\affiliation{%
  \institution{Institute of Computer Science}
  \department{University of Wroc{\l}aw}
  \country{Poland}
}
\author{Stefan Schmid}
\affiliation{%
  \institution{Department of Computer Science}
  \department{Aalborg University}
  \country{Denmark}
}
\author{Aleksandra Spyra}
\affiliation{%
  \institution{Institute of Computer Science}
  \department{University of Wroc{\l}aw}
  \country{Poland}
}

%%%%%%%%%%%%%%%%%%%%%%%%%%%%%%%%%%%%%%%%%%%%%%%%%%%%%%%%%%%%%%%%%%%%%%%%%%%%%
%%%%%%%%%%%%%%%%%%%%%%%%%%%%%%%%%%%%%%%%%%%%%%%%%%%%%%%%%%%%%%%%%%%%%%%%%%%%%

\begin{abstract}
We initiate the study of a natural and practically relevant new variant of
online caching where the to-be-cached items can have
dependencies.  We assume that the universe is a~tree~$T$ and items are tree
nodes; we require that if a node $v$ is cached then the whole subtree $T(v)$
rooted at $v$ is cached as well. This theoretical problem finds an immediate
application in the context of forwarding table optimization in IP routing and
software-defined networks.

We present an elegant online deterministic algorithm \ALG for this problem, and 
rigorously prove that its competitive ratio is 
$O(\textsc{height}(T) \cdot \kALG/(\kALG-\kOPT+1))$, where $\kALG$ and $\kOPT$
denote the cache sizes of an online and the optimal offline algorithm,
respectively. The result is optimal up to a factor of $O(\textsc{height}(T))$.
\end{abstract}

\maketitle

\renewcommand{\shortauthors}{M.~Bienkowski, J.~Marcinkowski, M.~Pacut, S.~Schmid, and A.~Spyra}

%%%%%%%%%%%%%%%%%%%%%%%%%%%%%%%%%%%%%%%%%%%%%%%%%%%%%%%%%%%%%%%%%%%%%%%%%%%%%%%%%
%%%%%%%%%%%%%%%%%%%%%%%%%%%%%%%%%%%%%%%%%%%%%%%%%%%%%%%%%%%%%%%%%%%%%%%%%%%%%%%%%

\section{Introduction}

In the classic online paging problem, items of some universe are requested by
a~processing entity (e.g., blocks of RAM are requested by the processor). To
speed up the access, computers use a faster memory, called
\emph{cache}, capable of accommodating $k$ such items. Upon a~request to a
non-cached item, the algorithm has to fetch it into the cache, paying a fixed
cost, while a request to a cached item is free. If the cache is full, the
algorithm has to free some space by evicting an arbitrary subset of items from
the cache.

The paging problem is inherently online: the algorithm has to make decisions
what to evict from the cache without the knowledge of future requests; its
cost is compared to the cost of an optimal \emph{offline} solution and the
worst-case ratio of these two amounts is called \emph{competitive ratio}. The first
analysis of this basic problem in an online model was given over three
decades ago by Sleator and Tarjan~\cite{competitive-analysis}. The problem was later
considered in a~variety of flavors. In particular, some papers considered a
\emph{bypassing model}~\cite{caching-rejection-penalties,paging-irani}, where
item fetching is optional: the requested item can be served without being in
the cache, for another fixed cost (usually being at most the cost of item
fetching).

In this paper, we introduce a natural extension of this fundamental problem, where
items have inter-de\-pen\-den\-cies. More precisely, we assume that the universe is
an arbitrary (not necessarily binary) rooted tree $T$ and the requested items
are its nodes. For any tree node $v$, $T(v) \subseteq T$ is a subtree rooted
at $v$ containing $v$ and all its descendants. We require the following
property: if a~node $v$ is in the cache, then all nodes of $T(v)$ are also
cached. In other words, we require that \emph{the cache is a~subforest of
$T$}, i.e., a union of disjoint subtrees of~$T$.  We call this problem
\emph{online tree caching}.

Furthermore, we assume a bypassing model and distinguish between two types of
requests: a request can be either \emph{positive} or \emph{negative}. The
positive requests correspond to ``normal'' requests known from caching
problems: we pay~$1$ if the node is not cached; for a negative request, we pay
$1$ if the corresponding request is cached. After serving the request, we may
reorganize our cache arbitrarily, but the resulting cache has to still be a
subforest of $T$. We pay $\alpha$ for fetching or evicting any single node,
where $\alpha \geq 1$ is an integer and a~parameter of the problem. Our goal
is to minimize the overall cost of maintaining the cache and serving the
requests.

One interesting application for our model arises in the context of modern IP
routers which need to store a rapidly increasing number of forwarding
rules~\cite{bgp-routeviews,steve-myth}. In \lref[Section]{sec:motivation}, we
give a glimpse of this application, discussing how tree caching algorithms can
be applied in existing systems to effectively reduce the memory requirements
on IP routers.

%%%%%%%%%%%%%%%%%%%%%%%%%%%%%%%%%%%%%%%%%%%%%%%%%%%%%%%%%%%%%%%%%%%%%%%%%%%%%%%%%%%

\subsection{Our Contributions and Paper Organization}

We initiate the study of a natural new caching with bypassing problem which
allows to account for tree-dependencies among items. The problem finds
immediate applications, e.g., in IP routing and software-defined networking
(see \lref[Section]{sec:motivation}).

In particular, we consider the online tree caching problem within the resource
augmentation paradigm: we assume that cache sizes of the online algorithm
($\kALG$)  and the optimal offline algorithm ($\kOPT$) may differ. We assume
$\kALG \geq \kOPT$ and let $R = \kALG/(\kALG-\kOPT+1)$.

In \lref[Section]{sec:algo}, we present an elegant deterministic online
algorithm~\ALG for this problem. While our algorithm is simple, its analysis
presented in \lref[Section]{sec:analysis} requires several non-trivial
insights into the problem. In particular, we rigorously prove that \ALG is
$O(h(T) \cdot R)$-competitive, where $h(T)$ is the height of tree~$T$. That
is, we show that there exists a constant~$\beta$, such that $\ALG(I) \leq
O(h(T) \cdot R) \cdot \OPT(I) + \beta$ for any input $I$. Note that this
result is optimal up to the factor~$O(h(T))$: in
\lref[Appendix]{sec:lower-bound-on-the-problem}, we show that the lower
bound~$R$ for the paging problem~\cite{competitive-analysis} implies an
$\Omega(R)$ lower bound for our problem for any $\alpha \geq 1$. Finally, in
\lref[Section]{sec:implementing_counters}, we show that \ALG can be
implemented efficiently.

%%%%%%%%%%%%%%%%%%%%%%%%%%%%%%%%%%%%%%%%%%%%%%%%%%%%%%%%%%%%%%%%%%%%%%%%%%%%%%%%%%%

\subsection{Related Work on Caching}

Our formal model is a novel variant of competitive paging, a~classic online
problem. In the framework of the competitive analysis, the paging problem was
first analyzed  by Sleator and Tarjan~\cite{competitive-analysis}, who showed
that algorithms \textsc{Least-Recently-Used}, \textsc{First-In-First-Out} and
\textsc{Flush-When-Full} are $\kALG / (\kALG - \kOPT + 1)$-competitive 
and no deterministic algorithm can beat this ratio. In the non-augmented case
when $\kALG = \kOPT = k$, the competitive ratio is simply $k$.

The simple paging problem was later generalized to allow different fetching
costs (weighted paging)~\cite{double-coverage,young-paging-greedy-dual} and
additionally different item sizes (file caching)~\cite{young-paging-landlord},
with the same competitive ratio. Asymptotically same results can be achieved
when bypassing is allowed (see \cite{caching-rejection-penalties,paging-irani}
and references therein). With randomization, the competitive ratio can be
reduced to $O(\log k)$ even for file caching~\cite{generalized-caching-optimal}. 
The lower bound for randomized algorithms is $H_k = 
\Theta(\log k)$~\cite{paging-mark} and is matched by known paging
algorithms~\cite{paging-optimal-easy,paging-optimal-difficult}.

To the best of our knowledge, the variant of caching, where fetching items to
the cache is not allowed unless some other items are cached (e.g., because of 
tree dependencies) was 
not considered previously in the framework of competitive analysis. Note that
there is a seemingly related problem called restricted
caching~\cite{restricted-caching} (there are also its variants called matroid
caching~\cite{matroid-caching} or companion caching~\cite{companion-caching}).
Despite naming similarities, the restricted caching model is completely
different from ours: there the restriction is that each item can be placed only in
a~restricted set of cache locations.

%%%%%%%%%%%%%%%%%%%%%%%%%%%%%%%%%%%%%%%%%%%%%%%%%%%%%%%%%%%%%%%%%%%%%%%%%%%%%%%%%%%
%%%%%%%%%%%%%%%%%%%%%%%%%%%%%%%%%%%%%%%%%%%%%%%%%%%%%%%%%%%%%%%%%%%%%%%%%%%%%%%%%%%

\section{Application: Minimizing Forwarding Tables in Routers}
\label{sec:motivation}

Dependencies among to-be-cached items arise in numerous settings and are a
natural refinement of many caching problems. To give a concrete example, one
important application for our tree-based dependency model arises in the context
of IP routers. In particular, the online tree caching problem we introduce in
this paper is motivated by router memory constraints in IP-based networks. The
material presented in this section serves for motivation, and is not necessary
for understanding the remainder of the paper.

Nowadays, routers have to store an~enormous number of forwarding rules: the
number of rules has doubled in the last six years~\cite{bgp-routeviews} and
the superlinear growth is likely to be sustained~\cite{steve-myth}. This
entails large costs for Internet Service Providers: fast router memory
(usually Ternary Content Addressable Memory (TCAM)) is expensive and
power-hungry~\cite{tcam-expensive}.  Many routers currently either operate at
(or beyond) the edge of their memory capacities. A~solution, which could delay
the need for expensive or impossible memory upgrades in routers, is to store
only a subset of rules in the actual router and store all rules on a~secondary
device (for example a commodity server with a large but slow
memory)~\cite{cacheflow,route-caching-flat,prefix-caching,fib-caching-non-overlapping,fibium-zipf}.

This solution is particularly attractive with the advent of Soft\-ware-Defined
Network (SDN) technology, which allows to manage the expensive memory using a
software controller~\cite{cacheflow,fibium-zipf}. In particular, our
theoretical model can describe real-world architectures
like~\cite{cacheflow,fibium-zipf},
that is, our model formalizes the underlying operational
problems of such architectures. Our 
algorithm, when applied in the context of such architectures, can 
hence be used to prolong the lifetime of IP routers.

\paragraph{Setup, positive requests, fetches and evictions.}

The setup (see~\cite{fibium-zipf} for a more technical discussion) depicted in
\lref[Figure]{fig:motivation} consists of two entities: the actual router 
(e.g., an OpenFlow switch) which caches only a~subset of all forwarding rules,
and the (SDN) controller, which keeps all rules in its less expensive and
slower memory. During runtime, packets arrive at the router, and if an
appropriate forwarding rule is found within the rules cached by the router,
then the packet is forwarded accordingly, and the associated cost is zero.
Otherwise, the packet has to be forwarded to the controller (where 
an~appropriate forwarding rule exists); this indirection costs~$1$. Hence, the
rules correspond to cacheable items and accesses to rules are modeled by
positive requests to the corresponding items. At some chosen points in time,
the caching algorithm run at the controller may decide to remove or add rules
to the cache. Any such change entails a~fixed cost $\alpha$.\footnote{This
cost corresponds to the transmission of a message from the controller to the
router as well as the update of internal data structures of the router. Such
an update of proprietary and vendor-dependent structures can be quite
costly~\cite{tcam-expensive-updates}, but the empirical studies show it to be
independent of the rule being updated~\cite{fib-updates}.}

\begin{figure}[t]
  \centering
  \includegraphics[width=0.8\columnwidth,keepaspectratio]{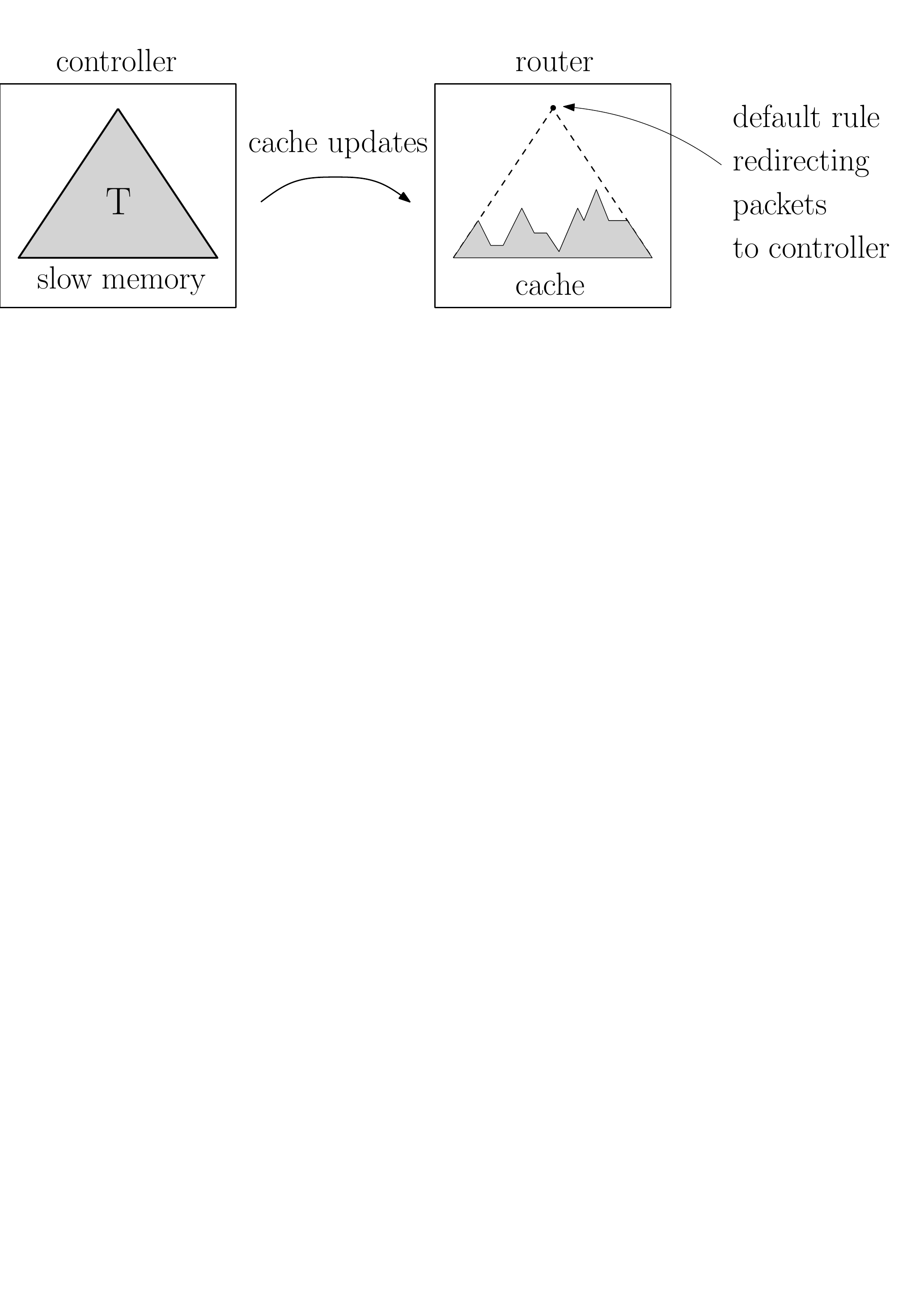}
  \caption{The router (\emph{right}) caches only a subset of all rules, and
  rules that are not cached are answered by the controller (\emph{left}) that
  keeps the whole tree of rules. Updates to the rules are passed by the
  controller to the router.}
  \label{fig:motivation}
\end{figure}

\paragraph{Tree dependencies.}

Note that the technical feasibility of this solution heavily depends on the
rule dependencies. In the most ubiquitous scenario, the rules are prefixes of
IP addresses (they are bit strings). Whenever a packet arrives, the router
follows a longest matching prefix (LMP) scheme: it searches for the rule that
is a~prefix of the destination IP of the packet and among matching rules it
chooses the longest one. In other words, if the prefixes corresponding to
rules are stored in the tree\footnote{We do not have to assume that they are
actually stored in a real tree; this tree is implicit in the LMP scheme.},
then the tree is traversed from the root downwards, and the last found rule is
used. This explains why we require the cached nodes to form a subforest:
leaving a less specific rule on the router while evicting a more specific one
(i.e., keeping a~tree node in cache while evicting its descendant) will result
in a~situation where packets will be forwarded according to the less specific
rule, and hence potentially exit through the wrong port. The LMP scheme also
ensures that the described approach is implementable: one could simply add
an~artificial rule at the tree root in the router (matching an empty prefix).
This ensures that when no actual matching rule is found in the router (in the
cache), the packet will be forwarded according to this artificial rule to the
controller that stores all the rules and can handle all packets appropriately.

So far, the papers on IP rule caching avoided dependencies either assuming
that rules do not overlap (a~tree has a single level)~\cite{route-caching-flat} 
or by preprocessing the tree, so that the rules become
non-overlapping~\cite{prefix-caching,fib-caching-non-overlapping}.
Unfortunately, this could lead to a large inflation of the routing table. A
notable exception is a recent solution called CacheFlow~\cite{cacheflow}. The
CacheFlow model supports dependencies even in the form of directed acyclic
graphs. However, CacheFlow was evaluated only experimentally, and no
worst-case guarantees were given on the overall cost of caching. Our work
provides theoretical foundations for respecting tree dependencies.

\paragraph{Negative requests.}

Additionally, a rule may need to be updated. For example, due to a~change
communicated by a dynamic routing protocol (e.g., BGP) the action defined by
a~rule has to be modified. In either case, we have to update the rules at the
controller: we assume that this cost is zero. (This cost is unavoidable for
any algorithm, so such an assumption makes our problem only more difficult.)
Furthermore, if the rule is also stored at the router, then we have to pay a~fixed
cost of $\alpha$ for updating the router (see the remark for the cost of
fetches and evictions). Such penalties can be easily simulated in our model:
we issue a~sequence of $\alpha$ negative requests to the updated node.  It is
straightforward to show that the costs in these two models can differ by a
factor of at most $2$. For a~formal argument, see
\lref[Appendix]{sec:bisimulation}.

\paragraph{Implementability.}

Note that the whole input (fed to a tree caching algorithm) is created at the
controller: positive requests are caused by cache misses (which redirect
packet to the controller) and batches of $\alpha$ negative requests are caused
by updates sent to the dynamic routing algorithm run at the controller.
Therefore, the whole tree caching algorithm can be implemented in software
in the controller only. Furthermore, our algorithm is a simple counter-based
scheme, which can be implemented efficiently and also fine-tuned for speed,
see \lref[Section]{sec:implementing_counters}.

\paragraph{Other work on forwarding table minimization.}

Other approaches for minimizing the number of stored rules were mostly based
on \emph{rules compression (aggregation)}, where the set of rules was replaced
by another equivalent and smaller set. Optimal aggregation of a fixed routing
table can be achieved by dynamic
programming~\cite{ortc,fib-compression-two-dimensional}, but the main
challenge lies in balancing the achieved compression and the amount of changes
to the routing table in the presence of \emph{updates} to this table. While
many practical heuristics have been devised by the networking community for
this problem~\cite{mms,fib-compression-fifa,fib-compression-globecom10,fib-compression-infocom13,fib-sigcomm,fib-compression-smalta,fib-compression-infocom10},
worst-case analyses were presented only for some restricted
scenarios~\cite{fib-icdcs,fib-sirocco}. Combining rules compression and rules
caching is so far an unexplored area.

%%%%%%%%%%%%%%%%%%%%%%%%%%%%%%%%%%%%%%%%%%%%%%%%%%%%%%%%%%%%%%%%%%%%%%%%%%%%%%%%%%%
%%%%%%%%%%%%%%%%%%%%%%%%%%%%%%%%%%%%%%%%%%%%%%%%%%%%%%%%%%%%%%%%%%%%%%%%%%%%%%%%%%%

\section{Preliminaries}\label{sec:preliminaries}

We denote the height of $T$ by $h(T)$. For any node $v$, $T(v)$ denotes the
subtree of $T$ rooted at $v$ (containing~$v$ and all its descendants). A
\emph{tree cap} rooted at $v$ is ``an~upper part'' of $T(v)$, i.e., it
contains $v$ and if it contains node~$u$, then it also contains all nodes on
the path from $u$ to $v$. If $A \subseteq B$ are both tree caps rooted at $v$,
then we say that $A$ is a tree cap of $B$.

We assume discrete time slotted into rounds, with round $t \geq 1$
corresponding to time interval $(t-1,t)$. In round $t$, the algorithm is given
one (positive or negative) request to exactly one tree node and has to process
it, i.e., pay associated costs (if any). Right after round~$t$, at time $t$,
the algorithm may arbitrarily reorganize its cache, (i) ensuring that the
resulting cache is a subforest of $T$ (i.e., if the cache contains node $v$,
then it contains the entire~$T(v)$) and (ii)~preserving the cache capacity
constraint. An algorithm pays $\alpha$ for a~single node fetch or eviction. We
denote the contents of the cache at round $t$ by $C_t$. (As the cache changes
contents only between rounds, $C_t$ is well defined.) We assume that $\alpha$
is an even integer (this assumption may change costs at most by a constant
factor). We assume that the algorithm starts with the empty cache.

We call a non-empty set $X$ a \emph{valid positive changeset} for cache $C$ if
$X \cap C = \emptyset$ and $C \cup X$ is a subforest of~$T$, and a~\emph{valid
negative changeset} if $X \subseteq C$ and $C \setminus X$ is a subforest of
$T$. We call $X$ a~\emph{valid changeset} if it is either valid positive or
negative changeset. Note that the union of positive (negative) changesets is
also a valid positive (negative) changeset. We say that the algorithm applies
changeset~$X$, if it fetches all nodes from~$X$ (for a positive changeset) and
evicts all nodes from $X$ (for a negative one). Note that not all valid
changesets may be applied as the algorithm is also limited by its cache capacity
($\kALG$ for an online algorithm and $\kOPT$ for the optimal offline one).

%%%%%%%%%%%%%%%%%%%%%%%%%%%%%%%%%%%%%%%%%%%%%%%%%%%%%%%%%%%%%%%%%%%%%%%%%%%%%%%%%%%
%%%%%%%%%%%%%%%%%%%%%%%%%%%%%%%%%%%%%%%%%%%%%%%%%%%%%%%%%%%%%%%%%%%%%%%%%%%%%%%%%%%

\section{Algorithm}\label{sec:algo}

The algorithm \textsc{Tree Caching} (\ALG) presented in the following is
a simple scheme that follows a \emph{rent-or-buy paradigm}: it fetches (or evicts)
a changeset $X$ if the cost associated with requests at $X$ reaches the cost of 
such fetch or eviction.

More concretely, \ALG operates in multiple phases. The first phase starts at time $0$.
\ALG starts each phase with the empty cache and proceeds as follows. Within a
phase, every node keeps a counter, which is initially zero. If at round~$t$ it
pays~$1$ for serving the request, it increments its counter. Whenever a node
is fetched or evicted from the cache, its counter is reset to zero. Note that
this implies that the counter of $v$ is equal to the number of negative
(positive) requests to $v$ since its last fetching to the cache (eviction from
the cache). For a~set $A \subseteq T$, we denote the sum of all counters in
$A$ at time $t$ by $\cnt_t(A)$. At time~$t$, \ALG verifies whether
there exists a valid changeset $X$, such that
\begin{itemize}
\item \emph{(saturation property)} $\cnt_t(X) \geq |X| \cdot \alpha$ and
\item \emph{(maximality property)} $\cnt_t(Y) < |Y| \cdot \alpha$ for any valid
  changeset $Y \supsetneq X$.
\end{itemize}
In this case, the algorithm modifies its cache applying~$X$. 

If, at time $t$, \ALG is supposed to fetch some set $X$, but by doing so it
would exceed the cache capacity $\kALG$, it evicts all nodes from the cache
instead, and starts a~new phase at time~$t$. Such a \emph{final eviction}
might not be present in the last phase, in which case we call it
\emph{unfinished}.

In \lref[Lemma]{lem:no_over-requested_changesets} (below), we show that at any
time, all valid changesets satisfying both properties of \ALG are either all
positive or all negative. Furthermore, right after the algorithm applies a
changeset, no valid changeset satisfies saturation property.

%%%%%%%%%%%%%%%%%%%%%%%%%%%%%%%%%%%%%%%%%%%%%%%%%%%%%%%%%%%%%%%%%%%%%%%%%%%%%%%%%%%
%%%%%%%%%%%%%%%%%%%%%%%%%%%%%%%%%%%%%%%%%%%%%%%%%%%%%%%%%%%%%%%%%%%%%%%%%%%%%%%%%%%

\section{Analysis of TC}
\label{sec:analysis}

Throughout the paper, we fix an input $I$, its partition into phases, and
analyze both \ALG and \OPT on a~single fixed phase $P$. We denote the times at
which $P$ starts and ends by $\beP$ and $\enP$, respectively, i.e., rounds in
$P$ are numbered from $\beP+1$ to $\enP$. A proof of the following technical
lemma follows by induction and is presented in 
\lref[Appendix]{sec:proof_of_lemma_1}.

\begin{lemma}
\label{lem:no_over-requested_changesets}
Fix any time $t > \beP$. For any valid changeset $X$ for $C_t$, it holds that
$\cnt_t(X) \leq |X| \cdot \alpha$. If a~changeset $X$ is applied at time $t$,
the following properties hold:
\begin{enumerate}
\item $X$ contains the node requested at round $t$, 
\label{lemit:1}
\item $\cnt_t(X) = |X| \cdot \alpha$, 
\label{lemit:2}
\item $\cnt_t(Y) < |Y| \cdot \alpha$ for any valid changeset $Y$ for~$C_{t+1}$
(note that $C_{t+1}$ is the cache state right after application of $X$),
\label{lemit:3}
\item $X$ is a tree cap of a tree from $C_{t+1}$ if
$X$ is positive and it is a~tree cap of a tree from $C_t$ if $X$ is
negative.
\label{lemit:4}
\end{enumerate}
\end{lemma}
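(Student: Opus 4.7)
The plan is to establish the global upper bound together with properties~(1)--(4) by a single joint induction on~$t$, strengthened to: at every time $s$, and in particular immediately after any changeset applied by \ALG at~$s$, every valid changeset $Z$ for $C_{s+1}$ satisfies $\cnt_s(Z) < |Z|\cdot\alpha$. The base $t = \beP$ is immediate (cache empty, counters zero). In the step, given a valid changeset $X$ for $C_t$, the strengthened hypothesis at $t-1$ gives $\cnt_{t-1}(X) < |X|\cdot\alpha$ --- directly if \ALG acted at $t-1$, and otherwise by the contrapositive of the application rule (if anything were saturated, a maximal saturated element would exist and would have been applied). Round~$t$ changes at most one counter by at most one, so the upper bound follows. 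If \ALG applies $X$, saturation plus the upper bound force equality $\cnt_t(X) = |X|\cdot\alpha$ (property~(2)), and the necessary increment pins the requested node $v$ inside $X$ and confirms that \ALG paid at round~$t$ (property~(1)).

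For property~(4) the key step is to argue that $X$ is connected in~$T$. If not, let $X_1 \ni v, X_2, \dots, X_k$ be its tree-connected components. Each $X_i$ remains a valid changeset of the same sign as $X$ (the subforest conditions on $C_t$ and $C_{t+1}$ carry over componentwise), so the upper bound together with $\sum_i \cnt_t(X_i) = |X|\cdot\alpha$ forces $\cnt_t(X_i) = |X_i|\cdot\alpha$ for every~$i$. But for $i\ge 2$ the node $v$ lies outside $X_i$, hence $\cnt_{t-1}(X_i) = \cnt_t(X_i) = |X_i|\cdot\alpha$, contradicting the strengthened hypothesis at~$t-1$. With $X$ connected, a brief case analysis on the shallowest node of $X$ shows its parent in~$T$ cannot lie in the enclosing cache (otherwise either $X\cap C_t\neq\emptyset$ in the positive case, or $C_t\setminus X$ fails to be a subforest in the negative case), so $X$ is a tree cap of a tree of $C_{t+1}$ (positive case) or of $C_t$ (negative case).

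The main obstacle is property~(3). Suppose towards contradiction that some valid changeset $Y$ for $C_{t+1}$ is saturated in the post-application counters. Decompose $Y = Y_X \sqcup Y_{\bar X}$ by intersection with~$X$; since the application zeroes exactly the counters on $X$ while leaving $Y_{\bar X}$ untouched, we have $\cnt_t(Y_{\bar X}) \ge |Y|\cdot\alpha$. When $X$ and $Y$ share the sign, the validity of $Y$ for $C_{t+1}$ automatically makes $Y$ disjoint from~$X$, so $X \cup Y$ is a valid same-sign changeset for $C_t$ whose total counter is $(|X|+|Y|)\cdot\alpha$: a saturated proper superset of~$X$, contradicting maximality. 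In the more delicate mixed-sign case, one first verifies (in both sign orderings, by symmetric short arguments propagating the subforest condition through children in~$T$) that $Y_{\bar X}$ is itself a valid changeset for $C_t$ of $Y$'s sign; then the upper bound applied to $Y_{\bar X}$ forces $Y_X = \emptyset$ and $\cnt_t(Y) = |Y|\cdot\alpha$. Finally, the cache status of $v$ is opposite to the one required for membership in~$Y$ (e.g.\ positive $X$ forces $v\notin C_t \supseteq Y$; negative $X$ forces $v\in C_t$ while $Y\cap C_t = \emptyset$), so round~$t$ does not touch any counter of~$Y$, yielding $\cnt_{t-1}(Y) = |Y|\cdot\alpha$, which contradicts the strengthened hypothesis at~$t-1$ and closes the induction.
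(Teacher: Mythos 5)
Your proof matches the paper's approach: the paper isolates your strengthened hypothesis as Invariant~1 of an auxiliary Claim (together with the upper bound as Invariant~2 and the ``saturated sets contain the requested node'' fact as Invariant~3), proves these by the same induction, and then derives Properties~(1)--(4) exactly as you do --- the same-sign case contradicting maximality, the mixed-sign case reducing to the prior-time invariant, and Property~(4) via the impossibility of decomposing $X$ into two disjoint valid changesets each forced to be saturated. Your inlining of Invariant~3 into direct appeals to the $t-1$ hypothesis is a cosmetic reshuffling, not a different argument.
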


In the following, we assume that no positive requests are given to nodes
inside cache and no negative ones to nodes outside of it. (This does not
change the behavior of \ALG and can only decrease the cost of \OPT.)

For the sake of analysis, we assume that at time $\enP$, \ALG actually
performs a cache fetch (exceeding the cache size limit) and then, at the same
time instant, empties the cache. This replacement only increases the cost of
\ALG. Let $k_P$ denote the number of nodes in the cache of $\ALG$ at $\enP$.
In a finished phase, we measure it after the artificial fetch, but right
before the final eviction, and thus $k_P \geq \kALG + 1$; in an unfinished
phase $k_P \leq \kALG$.

The crucial part of our analysis that culminates in
\lref[Section]{sec:shifting} is the technique of shifting requests. Namely, we
modify the input sequence by shifting requests up or down the tree, so that
the resulting input sequence (i) is not harder for \OPT and (ii) is more
structured: we may lower bound the cost of \OPT on each node separately and
relate it to the cost of \ALG.

%%%%%%%%%%%%%%%%%%%%%%%%%%%%%%%%%%%%%%%%%%%%%%%%%%%%%%%%%%%%%%%%%%%%%%%%%%%%%%%%%%%
%%%%%%%%%%%%%%%%%%%%%%%%%%%%%%%%%%%%%%%%%%%%%%%%%%%%%%%%%%%%%%%%%%%%%%%%%%%%%%%%%%%

\subsection{Event Space and Fields}
\label{sec:event}

In our analysis, we look at a two-dimensional, discrete, spatial-temporal
space, called the \emph{event space}. The first dimension is indexed by tree
nodes, whose order is an~arbitrary extension of the partial order given by the
tree. That is, the parent of a node $v$ is always ``above''~$v$. The second
dimension is indexed by round numbers of phase~$P$. The space elements are
called \emph{slots}. Some slots are occupied by requests: a~request at node
$v$ given at round $t$ occupies slot $(v,t)$. From now on, we will identify
$P$ with a set of requests occupying some slots in the event space.

We partition slots of the whole event space into disjoint parts, called
\emph{fields}, and we show how this partition is related to the costs of \ALG
and \OPT. For any node~$v$ and time $t$, $\last_v(t)$ denotes the last time
strictly before~$t$, when node $v$ changed state from cached to non-cached or
vice versa; $\last_v(t) = \beP$ if $v$ did not change its state before $t$ in
phase $P$. For a~changeset~$X_t$ applied by
\ALG at time $t$, we define the field $F^t$ as
\[
  F^t = \left\{\ (v,r) : v \in X_t \, \wedge\, \last_v(t)+1 \leq r \leq t\ \right\}.
\]
That is, field $F^t$ contains all the requests that eventually trigger the
application of $X_t$ at time $t$. We say that $F^t$ ends at $t$. We call field
$F^t$ \emph{positive} (\emph{negative}) if $X_t$ is a positive (negative)
changeset. An~example of a~partitioning into fields is given in
\lref[Figure]{fig:fields}. We define $\req(F^t)$ as the number of requests
belonging to slots of~$F^t$ and let $\size(F^t)$ be the number of involved
nodes (note that $\size(F^t) = |X_t|)$. The observation below follows
immediately by \lref[Lemma]{lem:no_over-requested_changesets}.

\begin{figure}[t]
  \centering
  \includegraphics[width=0.99\columnwidth,keepaspectratio]{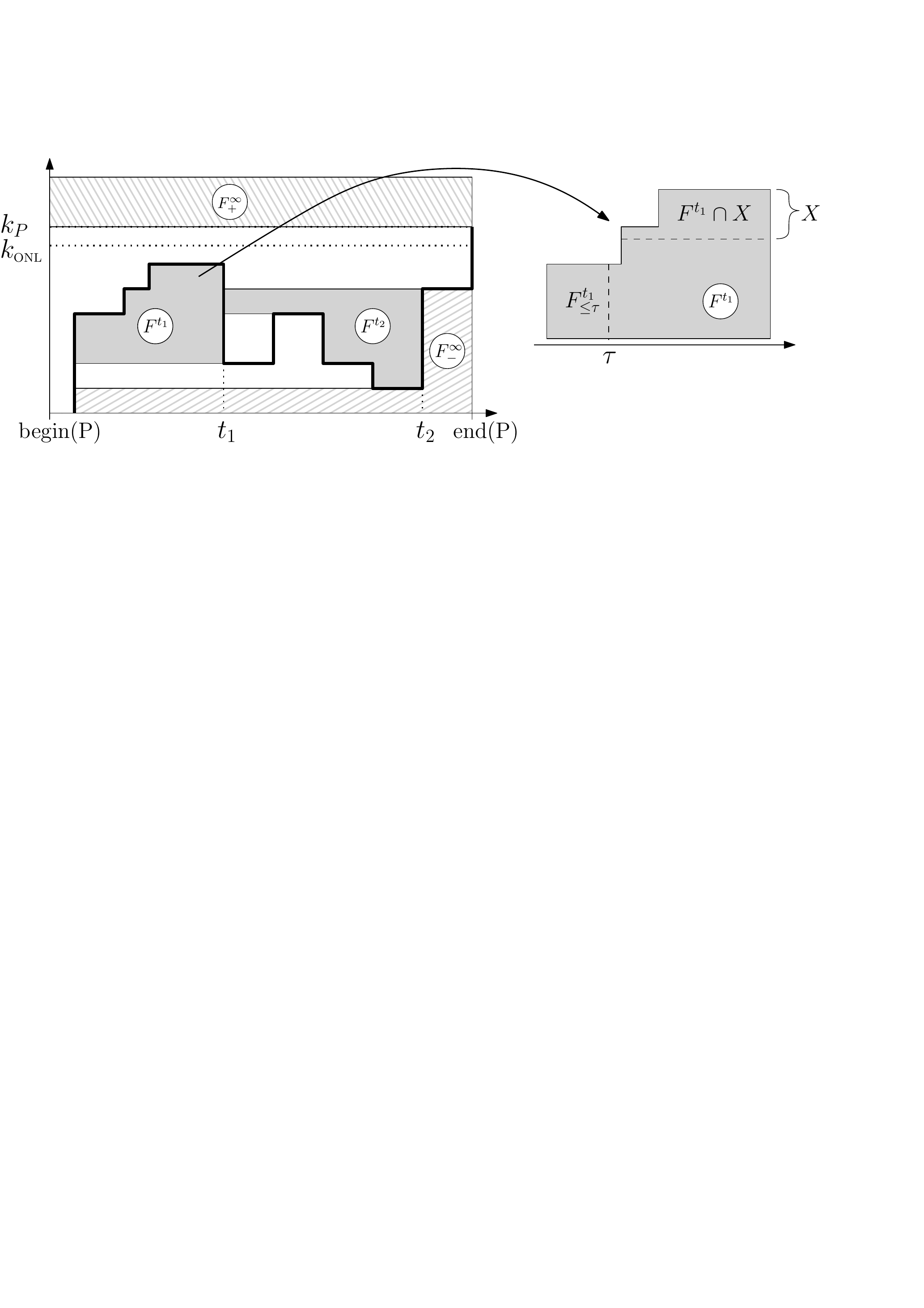}
  \caption{Partitioning of a single phase into fields for a line (a tree with
  no branches). The thick line represents cache contents. Possible final eviction
  at $\enP$ is not depicted. $F^{t_1}$ is a~negative field and $F^{t_2}$ is a
  positive one. In the particular depicted example, nodes are ordered from the
  leaf (bottom) to the root (top  of the picture). We emphasize that for a
  general, branched tree, some notions (in particular fields) no longer have
  nice geometric interpretations.}
  \label{fig:fields}
\end{figure}

\begin{observation}
\label{obs:field_requests}
For any field $F$, $\req(F) = \size(F) \cdot \alpha$. All these requests are
positive (negative) if $F$ is positive (negative).
\end{observation}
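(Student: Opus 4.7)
The plan is to reduce the observation to Lemma~\ref{lem:no_over-requested_changesets}(\ref{lemit:2}). Every field is of the form $F^t$ for the changeset $X_t$ that \ALG applies at time $t$, so it suffices to establish the bookkeeping identity $\req(F^t) = \cnt_t(X_t)$: the lemma then immediately gives $\req(F^t) = |X_t|\cdot\alpha = \size(F^t)\cdot\alpha$, as claimed.

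To prove the identity, I would first invoke the standing assumption that positive requests hit only non-cached nodes and negative ones only cached nodes, so every request in the phase is paid for by its target node and increments that node's counter by exactly one. Second, by the definition of $\last_v(t)$, node $v$ does not change its cache state anywhere on the interval $[\last_v(t)+1, t]$, so its counter is neither reset nor interrupted in this interval. Combining the two facts, $\cnt_t(v)$ equals exactly the number of request-occupied slots $(v, r)$ with $\last_v(t)+1 \leq r \leq t$; summing over $v \in X_t$ and using the definition of $F^t$ yields $\cnt_t(X_t) = \req(F^t)$.

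For the sign part, I would argue that the cache membership of every involved node is constant across the slots of $F^t$: for any $v \in X_t$ and any round $r$ with $\last_v(t)+1 \leq r \leq t$, the state of $v$ at round $r$ coincides with its state at round $t$ by the same ``no state change on $[\last_v(t)+1, t]$'' observation. If $F^t$ is positive then validity of the positive changeset gives $v \notin C_t$, hence $v \notin C_r$, and the standing assumption forces the request at $(v, r) \in F^t$ to be positive; the negative case is symmetric. I foresee no real obstacle; the whole argument is an unfolding of definitions, the one small pitfall being to remember that $\last_v(t)$ is defined \emph{strictly} before $t$, so that the entire closed interval $[\last_v(t)+1, t]$ indeed witnesses a single uninterrupted cache state for $v$.
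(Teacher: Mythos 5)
Your proposal is correct and takes exactly the route the paper intends: the paper simply remarks that the observation ``follows immediately by Lemma~\ref{lem:no_over-requested_changesets},'' and what you have done is spell out the bookkeeping identity $\req(F^t)=\cnt_t(X_t)$ (using the standing assumption and the ``no state change on $[\last_v(t)+1,t]$'' fact) and then invoke Lemma~\ref{lem:no_over-requested_changesets}(\ref{lemit:2}), with the sign statement handled by the same constancy argument plus validity of the changeset. This is the same argument, just unpacked.
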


Finally, we call the rest of the event space defined by phase $P$
\emph{open field} and denote it by $F^\infty$. The set of all fields except $
F^\infty$ is denoted by $\F$. Let $\size(\F) = \sum_{F \in \F} \size(F)$.

\begin{lemma}
\label{lem:alg_cost}
For any phase $P$ partitioned into a set of fields $\F \cup \{ F^\infty \}$,
it holds that $\ALG(P) \leq 2 \alpha \cdot \size(\F) + \req(F^\infty) + k_P
\cdot \alpha$.
\end{lemma}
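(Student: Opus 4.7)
The plan is to decompose $\ALG(P)$ into three types of expenditures, and then charge each type to a term on the right-hand side via the field partition.

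First I would write $\ALG(P)$ as the sum of (i) the serving cost for requests in $P$ (recall that, by our standing assumption, every request in $P$ is actually paid for, because positive requests go to non-cached nodes and negative ones to cached nodes), (ii) the total cost of all changesets applied by \ALG strictly inside $P$, and (iii) the final eviction cost at $\enP$. The changesets in (ii) are exactly the ones that end the non-open fields, so (ii) equals $\sum_{F \in \F} \alpha \cdot \size(F) = \alpha \cdot \size(\F)$, using that the cost of applying a changeset~$X_t$ is $\alpha \cdot |X_t| = \alpha \cdot \size(F^t)$. For (iii), by the convention that \ALG empties its cache at $\enP$ (performing the artificial fetch beforehand in a finished phase), the final eviction costs at most $k_P \cdot \alpha$; in an unfinished phase there is no final eviction, so this bound holds trivially.

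For (i), the fields partition the slots of the event space of $P$, so every request of $P$ lies in exactly one field, and hence the serving cost equals $\sum_{F \in \F} \req(F) + \req(F^\infty)$. Here I invoke Observation~\ref{obs:field_requests} to replace each $\req(F)$ with $\size(F) \cdot \alpha$, turning the $\F$-summand into $\alpha \cdot \size(\F)$. Adding (i), (ii), (iii) yields
\[
  \ALG(P) \;\leq\; \alpha \cdot \size(\F) + \req(F^\infty) + \alpha \cdot \size(\F) + k_P \cdot \alpha,
\]
which is exactly the claimed bound.

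The only slightly delicate step is confirming that each paid request in $P$ is counted exactly once on the right-hand side, and that each fetch/eviction performed by \ALG inside $P$ corresponds to a field in $\F$ (so it is counted in the $\alpha \cdot \size(\F)$ summand once, not twice). Both follow directly from the definition of fields: a field $F^t$ ends precisely when \ALG applies the changeset $X_t$ with $\size(F^t) = |X_t|$, and the slots of $F^t$ span exactly the intervals $[\last_v(t)+1,\,t]$ for $v \in X_t$, so the partition into $\F \cup \{F^\infty\}$ accounts simultaneously for all of \ALG's in-phase changesets and for every request slot in $P$. I do not expect any substantive obstacle beyond this bookkeeping.
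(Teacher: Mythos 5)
Your proof is correct and takes essentially the same route as the paper: split $\ALG(P)$ into the serving cost of requests (apportioned to fields via Observation~\ref{obs:field_requests}), the reorganization cost of the in-phase changesets (one per field in $\F$), and the final eviction bounded by $k_P\cdot\alpha$. The only cosmetic wrinkle is your phrase ``strictly inside $P$'' for the changesets of (ii) — the artificial fetch $X_{\enP}$ also occurs at $\enP$ and must be included there, as you in fact do by identifying (ii) with the fields of $\F$.
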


\begin{proof}
By \lref[Observation]{obs:field_requests}, the cost associated with serving
the requests from all fields from $\mathcal{F}$ is $\sum_{F \in \F} \alpha
\cdot \size(F) = \alpha \cdot \size(\F)$. The cost of the cache reorganization
at the fields' ends is exactly the same. The term $\req(F^\infty)$ represents
the cost of serving the requests from $F^\infty$ and $k_P \cdot \alpha$
upper-bounds the cost of the final eviction (not present in an unfinished
phase).
\end{proof}

%%%%%%%%%%%%%%%%%%%%%%%%%%%%%%%%%%%%%%%%%%%%%%%%%%%%%%%%%%%%%%%%%%%%%%%%%%%%%%%%%%%
%%%%%%%%%%%%%%%%%%%%%%%%%%%%%%%%%%%%%%%%%%%%%%%%%%%%%%%%%%%%%%%%%%%%%%%%%%%%%%%%%%%

\subsection{Shifting Requests}\label{sec:shifting}

The actual challenge in the proof is to relate the structure of the fields to
the cost of {\OPT}. The rationale behind our construction is based on the
following thought experiment. Assume that the phase is unfinished (for
example, when the cache is so large that the whole input corresponds to a
single phase). Recall that the number of requests in each field $F \in \F$ is
equal to $\size(F) \cdot \alpha$. Assume that these requests are evenly
distributed among the nodes of $F$ (each node from $F$ receives $\alpha$
requests in the slots of $F$). Then, the history of any node $v$ is
alternating between periods spent in positive fields and periods spent in
negative fields. By our even distribution assumption, each such a period
contains exactly $\alpha$ requests. Hence, for any two consecutive periods of
a~single node, \OPT has to pay at least $\alpha$ (either $\alpha$ for positive
requests or $\alpha$ for negative ones, or $\alpha$ for changing the
cached/non-cached state of $v$). Essentially, this shows that $\OPT$ has to
pay an amount that can be easily related to $\alpha \cdot
\size(\F)$.

Unfortunately, the requests may not be evenly distributed among the nodes. To
alleviate this problem, we will modify the requests in phase $P$, so that the
newly created phase $P'$ is not harder for $\OPT$ and will ``almost'' have the
even distribution property. In this construction, the time frame of $P$ and
its fields are fixed.

\subsubsection{Legal Shifts}

We say that a request placed originally (in phase $P$) at slot $(v,t)$ is
\emph{legally shifted} if its new slot is $(m(v), t)$, where (i) for a
positive request, $m(v)$ is either equal to~$v$ or is one of its descendants
and (ii) for a negative request, $m(v)$ is either equal to $v$ or is one of
its ancestors. For any fixed sequence of fetches and evictions within phase
$P$, the associated cost may only decrease when these actions are replayed on
the modified requests.

\begin{observation}
\label{obs:pprim_easier_than_p}
If $P'$ is created from $P$ by legally shifting the requests, then $\OPT(P')
\leq \OPT(P)$.
\end{observation}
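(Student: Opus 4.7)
The plan is to show the inequality by exhibiting an explicit strategy for $\OPT$ on $P'$ whose cost is at most $\OPT(P)$, namely the one obtained by replaying the cache operations performed by an optimal solution for $P$. Fix an optimal schedule for $P$ with cache states $C_0, C_1, \ldots, C_{\enP}$, and apply the very same sequence of fetches and evictions at the same time instants while processing $P'$. Since the sequence of cache modifications is identical, the resulting cache states are exactly $C_0, C_1, \ldots, C_{\enP}$ and, in particular, each $C_t$ is a valid subforest of~$T$. The total reorganization cost (fetches and evictions) on $P'$ therefore equals the reorganization cost of $\OPT$ on $P$.

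It then suffices to show that for every round $t$, the per-round serving cost on $P'$ is at most that on $P$. Consider a round~$t$ at which $P$ has a positive request at node $v$, so that $P'$ has a positive request at some $m(v) \in T(v)$. In $P$ one pays $1$ iff $v \notin C_t$, while in $P'$ one pays $1$ iff $m(v) \notin C_t$. Since $C_t$ is a subforest, $v \in C_t$ implies $T(v) \subseteq C_t$, hence $m(v) \in C_t$; contrapositively, $m(v) \notin C_t$ implies $v \notin C_t$, so the cost on $P'$ does not exceed that on $P$.

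The symmetric case is a negative request at $v$ in $P$ shifted to an ancestor $m(v)$ in $P'$. Here the costs are $\mathbf{1}[v \in C_t]$ and $\mathbf{1}[m(v) \in C_t]$, respectively. Again using the subforest property, $m(v) \in C_t$ forces $T(m(v)) \subseteq C_t$, and since $v \in T(m(v))$ we get $v \in C_t$; hence once more the cost on $P'$ is bounded by the cost on $P$. Summing over rounds yields that the constructed strategy costs at most $\OPT(P)$ on $P'$, and consequently $\OPT(P') \leq \OPT(P)$.

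There is no real obstacle here: the only substantive content is the monotonicity property of subforest caches under moving positive requests downwards and negative requests upwards, which is immediate from the definition. The main thing to be careful about is to keep the cache-change sequence fixed while only re-evaluating request-serving costs, so that the reorganization cost contribution is preserved exactly and the argument reduces to a round-by-round comparison.
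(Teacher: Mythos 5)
Your proof is correct and takes essentially the same approach as the paper: the paper's (very brief) argument is precisely to fix the sequence of fetches and evictions and note that replaying them on the shifted requests can only decrease the serving cost, which you have spelled out round by round using the subforest invariant on the cache.
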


The main difficulty is however in keeping the legally shifted requests within
the field they originally belonged to. For example, a negative request from
$F$ shifted at round $t$ from node~$u$ to its parent may fall out of $F$ as
the parent may still be outside the cache at round~$t$. In effect, a careless
shifting of requests may lead to a situation where, for a single node~$v$,
requests do not create interleaved periods of positive and negative requests,
and hence we cannot argue that $\OPT(P')$ is sufficiently large.

In the following subsections, we show that it is possible to legally shift the
requests of any field $F \in \F$ (i.e., shift positive requests down and negative
requests up), so that they remain within $F$, and they will
be either exactly or approximately evenly distributed among nodes of $F$.
This will create $P'$ with appropriately large cost for \OPT.

%%%%%%%%%%%%%%%%%%%%%%%%%%%%%%%%%%%%%%%%%%%%%%%%%%%%%%%%%%%%%%%%%%%%%%%%%%%%%%%%%%%
%%%%%%%%%%%%%%%%%%%%%%%%%%%%%%%%%%%%%%%%%%%%%%%%%%%%%%%%%%%%%%%%%%%%%%%%%%%%%%%%%%%

\subsubsection{Notation}
We start with some general definitions and remarks. For any field $F$ and set
of nodes~$A$, let $F \cap A = \{ (v,t) \in F : v \in A \}$. Analogously, if
$L$ is a set of rounds, then let $F \cap L = \{ (v,t) \in F : t \in L \}$. For
any field $F^t$ and time $\tau$, we define
\[
    F^t_{\leq \tau} = F^t \cap \left\{ t' : t' \leq \tau \right\}.
\]
It is convenient to think that $F^t$ evolves with time and $F^t_{\leq \tau}$
is the snapshot of $F^t$ at time~$\tau$. Note that $F^t$ may have some nodes
not included in $F^t_{\leq \tau}$. These objects are depicted in
\lref[Figure]{fig:fields}.

We may extend the notions of $\req$ and $\size$ to arbitrary subsets of fields
in a natural way.
For any subset $S \subseteq F$, we call it \emph{over-requested} if
$\req(S) > \size(S) \cdot \alpha$. 

\begin{lemma}
\label{lem:not_over-requested}
Fix any field $F^t$, the corresponding changeset $X_t$, and any time $\tau$.
\begin{enumerate}
\item If $F^t$ is negative, then for any tree cap $D$ of $X_t$, the set
    $F^t_{\leq \tau} \cap D$ is not over-requested.
\item If $F^t$ is positive, then for any subtree $T' \subseteq T$, the set
    $F^t_{\leq \tau} \cap T'$ is not over-requested.
\end{enumerate}
\end{lemma}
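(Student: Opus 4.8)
The plan is to prove both parts by relating the request count inside the indicated region to the counter values that trigger (or fail to trigger) changesets, using Lemma~\ref{lem:no_over-requested_changesets}. The basic idea: the requests sitting in the slots of $F^t_{\le\tau}$ restricted to a node set $A$ are precisely the requests counted since each node in $A$ last changed state and up to time $\tau$ (provided those nodes lie in $X_t$). So $\req(F^t_{\le\tau}\cap A)$ equals $\cnt_\tau(A')$ for the appropriate subset $A'=A\cap X_t$ of nodes, evaluated at time $\tau$ — here I use that within a field the counters are never reset between $\last_v(t)+1$ and $t$, and $\tau\le t$. If I can show $A'$ is a valid changeset for $C_\tau$ of the same polarity, then the first assertion of Lemma~\ref{lem:no_over-requested_changesets} ($\cnt_\tau(X)\le|X|\cdot\alpha$ for every valid changeset $X$) gives exactly the "not over-requested" conclusion, since $\size(F^t_{\le\tau}\cap A)=|A'|$.

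For part~1 ($F^t$ negative): $X_t$ is, by Lemma~\ref{lem:no_over-requested_changesets}\eqref{lemit:4}, a tree cap of a tree in $C_t$, so every node of $X_t$ is cached throughout $(\last_v(t),t]$, and $X_t$ consists of negative requests only (Observation~\ref{obs:field_requests}). Let $D$ be a tree cap of $X_t$. First I would check that $D$ is a valid negative changeset for $C_\tau$: since $\tau\le t$ and no node of $X_t$ changed state strictly after $\last_v(t)$ and strictly before $t$, all nodes of $X_t$ (hence of $D$) are in $C_\tau$; and $C_\tau\setminus D$ is a subforest because $D$ is a tree cap and removing a tree cap from a subforest leaves a subforest. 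Next, the requests of $F^t_{\le\tau}\cap D$ are exactly the negative requests to nodes of $D$ in the window each counter has been running, so $\req(F^t_{\le\tau}\cap D)=\cnt_\tau(D)$, which is $\le|D|\cdot\alpha=\size(F^t_{\le\tau}\cap D)\cdot\alpha$ by the first claim of Lemma~\ref{lem:no_over-requested_changesets}. Hence $F^t_{\le\tau}\cap D$ is not over-requested.

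For part~2 ($F^t$ positive): now $X_t$ is a tree cap of a tree in $C_{t+1}$ (Lemma~\ref{lem:no_over-requested_changesets}\eqref{lemit:4}) and consists of positive requests to nodes that are \emph{outside} the cache during their counting window. Given an arbitrary subtree $T'$, set $A'=X_t\cap T'$. The subtlety here is that $A'$ need not be a tree cap of $X_t$, so I need to argue it is still a valid positive changeset for $C_\tau$. Since $T'$ is a full rooted subtree and $X_t$ is a tree cap, the intersection $X_t\cap T'$ is itself a tree cap (of whichever node is the "entry point" of $X_t$ into $T'$, if any); moreover $C_\tau$ contains none of these nodes (they are outside the cache throughout their windows, and $\tau\le t$), and $C_\tau\cup A'$ is a subforest — this needs a short check that adding a tree cap to a subforest, when the parent (if inside $T(v)$) is already present, keeps it a subforest; I would handle this by noting $A'$ together with whatever of $X_t$ sits above it forms a tree cap rooted at the same node, and appeal to the closure of valid positive changesets under the relevant operations. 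Then again $\req(F^t_{\le\tau}\cap T')=\req(F^t_{\le\tau}\cap A')=\cnt_\tau(A')\le|A'|\cdot\alpha$.

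The main obstacle I anticipate is the bookkeeping in part~2: verifying that $X_t\cap T'$ (and similarly the tree cap $D$ in part~1, but that one is easier) really is a valid changeset of the correct sign for the cache state $C_\tau$ at the earlier time $\tau$, rather than at $t$ — i.e., tracking that no intervening state changes occurred for the relevant nodes and that the subforest property survives intersecting with $T'$. Once that structural claim is in place, each part is a one-line invocation of the counter bound in Lemma~\ref{lem:no_over-requested_changesets} together with the identification $\req(F^t_{\le\tau}\cap A)=\cnt_\tau(A\cap X_t)$, which follows from the definitions of $F^t$, $\last_v$, and the counter-reset rule of \ALG.
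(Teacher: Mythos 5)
Your overall strategy is the same as the paper's: identify the requests in the region with counter values at time $\tau$, argue that the relevant node set is a valid changeset for $C_\tau$, and invoke the first assertion of \lref[Lemma]{lem:no_over-requested_changesets}. However, the node set you work with is the wrong one, and this is a genuine gap, not bookkeeping. You take $A'=A\cap X_t$, i.e.\ the \emph{whole} tree cap $D$ (resp.\ the whole $X_t\cap T'$), whereas ``not over-requested'' compares $\req(F^t_{\leq\tau}\cap A)$ against $\size(F^t_{\leq\tau}\cap A)\cdot\alpha$, and $\size$ of the snapshot counts only the nodes whose slots have already begun, i.e.\ the nodes $v$ with $\last_v(t)<\tau$ (the paper explicitly notes that $F^t$ may have nodes not included in $F^t_{\leq\tau}$). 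For a node $v\in X_t$ with $\last_v(t)\geq\tau$ three of your claims fail: (i) in the negative case such a $v$ is fetched only at time $\last_v(t)\geq\tau$ and need not lie in $C_\tau$, so ``all nodes of $X_t$ (hence of $D$) are in $C_\tau$'' is false and $D$ need not be a valid negative changeset for $C_\tau$; symmetrically, in the positive case such a node may still be cached at round $\tau$, so $X_t\cap T'$ need not be disjoint from $C_\tau$; (ii) its counter at time $\tau$ counts requests received since its last state change \emph{before} $\tau$, and those requests belong to an earlier field, so the identity $\req(F^t_{\leq\tau}\cap A)=\cnt_\tau(A\cap X_t)$ is false (only ``$\leq$'' holds); (iii) even granting (i) and (ii), you would obtain $\req(F^t_{\leq\tau}\cap A)\leq|A\cap X_t|\cdot\alpha$, which is strictly weaker than the required bound $\size(F^t_{\leq\tau}\cap A)\cdot\alpha$ whenever some nodes of $A\cap X_t$ have not yet entered the snapshot.

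The paper's proof avoids all three issues by working with exactly the node set of $F^t_{\leq\tau}\cap D$ (resp.\ $F^t_{\leq\tau}\cap T'$), i.e.\ the nodes with $\last_v(t)<\tau$: for those nodes no counter reset occurs in $(\last_v(t),t)$, so their counters at time $\tau$ sum to precisely the number of requests in the snapshot; the asserted valid changeset for $C_\tau$ is this restricted set; and its cardinality is precisely the snapshot's $\size$, so \lref[Lemma]{lem:no_over-requested_changesets} yields the inequality with the correct right-hand side. So the step you yourself flag as the main obstacle --- verifying validity of the changeset at the earlier time $\tau$ --- is indeed where the content lies, but it has to be verified for the snapshot node set; your justification (``no intervening state changes'') is only available for nodes whose window has already started by time $\tau$, which is exactly the restriction your argument omits.
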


\begin{proof} 
As the nodes from $F^t_{\leq \tau} \cap D$ form a valid changeset at time~$\tau$, 
\lref[Lemma]{lem:no_over-requested_changesets} implies $\req(F^t_{\leq
\tau} \cap D) = \cnt_\tau(F^t_{\leq \tau} \cap D) \leq |F^t_{\leq \tau} \cap
D| \cdot \alpha$.

The proof of the second property is identical: As $F^t_{\leq \tau} \cap T'$ is
also a valid changeset at time $\tau$, by
\lref[Lemma]{lem:no_over-requested_changesets}, $\req(F^t_{\leq \tau}
\cap T') = \cnt_\tau(F^t_{\leq \tau} \cap T')
\leq |F^t_{\leq \tau} \cap T'| \cdot \alpha$. 
\end{proof}

By \lref[Lemma]{lem:not_over-requested} applied at $\tau = t$ and
\lref[Observation]{obs:field_requests}, we deduct the following corollary.

\begin{corollary}
\label{cor:density}
Fix any field $F^t$, the corresponding changeset $X_t$ and any tree
cap $D$ of $X_t$. 
\begin{enumerate}
\item If $F^t$ is positive, then $\req(F^t \cap D) \geq \alpha \cdot |D|$.
\item If $F^t$ is negative, then $\req(F^t \cap (X_t \setminus D)) \geq 
  \alpha \cdot \text{$|X_t \setminus D|$}$.
\end{enumerate}
\end{corollary}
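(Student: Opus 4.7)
The plan is to combine \lref[Observation]{obs:field_requests} (which fixes $\req(F^t) = \alpha \cdot |X_t|$) with \lref[Lemma]{lem:not_over-requested} at $\tau = t$ (so that $F^t_{\leq t} = F^t$) and a careful decomposition of $X_t$ into $D$ and $X_t \setminus D$. Throughout I will use the fact that every slot of $F^t$ sits at a node of $X_t$, by the very definition of $F^t$; consequently $F^t = (F^t \cap D) \sqcup (F^t \cap (X_t \setminus D))$ and $\req(F^t \cap D) + \req(F^t \cap (X_t \setminus D)) = \alpha \cdot |X_t|$, so a lower bound on one summand is equivalent to an upper bound on the other.

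For the negative case (part~2), a single application of part~1 of \lref[Lemma]{lem:not_over-requested} to the tree cap $D$ of $X_t$ gives $\req(F^t \cap D) \leq \alpha \cdot |D|$; subtracting from $\alpha \cdot |X_t|$ yields $\req(F^t \cap (X_t \setminus D)) \geq \alpha \cdot |X_t \setminus D|$. For the positive case (part~1) I need the opposite inequality, namely an upper bound on $\req(F^t \cap (X_t \setminus D))$. To get it, observe that $X_t$ is itself a tree cap (of some tree in $C_{t+1}$, by item~(\ref{lemit:4}) of \lref[Lemma]{lem:no_over-requested_changesets}) and $D$ is a tree cap of $X_t$, so $X_t \setminus D$ splits into its maximal connected components $U_1,\ldots,U_m$, each of the form $U_i = X_t \cap T(w_i)$ for a root $w_i \in X_t \setminus D$ whose parent lies in $D$. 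Because requests of $F^t$ are only placed at nodes of $X_t$, I have $F^t \cap T(w_i) = F^t \cap U_i$ for each $i$, so part~2 of \lref[Lemma]{lem:not_over-requested} applied to the subtree $T' = T(w_i)$ gives $\req(F^t \cap T(w_i)) \leq \alpha \cdot |U_i|$. Summing over $i$ yields $\req(F^t \cap (X_t \setminus D)) \leq \alpha \cdot |X_t \setminus D|$, and subtracting from $\alpha \cdot |X_t|$ concludes the positive case.

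The one subtlety, and the only place where anything beyond one-line arithmetic is needed, is the decomposition of $X_t \setminus D$ into full subtrees $T(w_i)$ of $T$ in the positive case: part~2 of \lref[Lemma]{lem:not_over-requested} is phrased for subtrees $T' \subseteq T$ rooted at a single node, so the forest $X_t \setminus D$ has to be broken into its connected pieces before the lemma can be applied piecewise. Once this is set up, both parts of the corollary follow immediately by the same decomposition-and-subtract strategy.
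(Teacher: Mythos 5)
Your proof is correct and is precisely the argument the paper has in mind: take $\req(F^t) = \alpha\cdot|X_t|$ from Observation~\ref{obs:field_requests}, upper-bound the complementary part of $X_t$ via Lemma~\ref{lem:not_over-requested} at $\tau=t$, and subtract. The only step requiring any care --- in the positive case, decomposing $X_t\setminus D$ into the pieces $U_i = X_t\cap T(w_i)$ with $p(w_i)\in D$, and noting that since $F^t$ only places requests at nodes of $X_t$ one has $F^t\cap T(w_i)=F^t\cap U_i$, so that Lemma~\ref{lem:not_over-requested}(2) applied to the full subtree $T(w_i)$ yields $\req(F^t\cap U_i)\le\alpha|U_i|$ --- is exactly the detail the paper leaves implicit, and you handle it correctly.
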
 

Informally speaking, the corollary above states that the average amount of
requests in a positive field is \emph{at least as large at the top of the
field as at its bottom}. For a negative field this relation is reversed.

%%%%%%%%%%%%%%%%%%%%%%%%%%%%%%%%%%%%%%%%%%%%%%%%%%%%%%%%%%%%%%%%%%%%%%%%%%%%%%%%%%%
%%%%%%%%%%%%%%%%%%%%%%%%%%%%%%%%%%%%%%%%%%%%%%%%%%%%%%%%%%%%%%%%%%%%%%%%%%%%%%%%%%% 

\subsubsection{Shifting Negative Requests Up}
\label{sec:negative_shifting}
  
Fix a valid negative changeset $X_t$ applied at time~$t$ and the
corresponding field~$F^t$. We call a~tree cap \mbox{$Y \subseteq X_t$} \emph{proper} if
\begin{enumerate}
\item $\req(F^t \cap Y) = |Y| \cdot \alpha$ and
\item $F^t_{\leq \tau} \cap D$ is not over-requested for any tree cap $D \subseteq Y$ and any time 
$\tau \leq t$.
\end{enumerate}

The first property of \lref[Lemma]{lem:not_over-requested} states that before
we shift the requests of $F_t$, the set $X_t$ is proper.  We start with $Y =
X_t$, and proceed in a bottom-up fashion, inductively using the lemma below.
We take care of a~single node of $Y$ at a time and ensure that after the shift
the number of requests at this node is exactly $\alpha$ and the remaining part
of $Y$ remains proper.

\begin{lemma}
\label{lem:shift_up_and_stay_proper}
Given a negative field $F^t$, the corresponding changeset~$X_t$ and 
a proper tree cap $Y \subseteq X_t$, it is possible to choose a leaf $v$ 
and legally shift some requests inside $Y$,
so that in result $\req({v}) = \alpha$ and $Y \setminus \{v\}$ is proper.
\end{lemma}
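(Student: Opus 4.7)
The plan is to pick a leaf $v$ of $Y$ for which the number of requests of $F^t$ is already at least $\alpha$, and then shift the surplus up to $v$'s parent $p$ in $Y$. The trivial case $|Y| = 1$ is immediate: $Y$ consists only of its root, condition~(1) of properness forces $\req(F^t \cap Y) = \alpha$, and taking $v$ to be this root with the empty shift works, since $Y \setminus \{v\} = \emptyset$ is vacuously proper. Assume therefore $|Y| \geq 2$, let $L$ be the set of leaves of $Y$, and note that the root is not in $L$, so $Y \setminus L$ is itself a tree cap of $Y$. Applying condition~(2) of properness of $Y$ to $Y \setminus L$ at $\tau = t$, together with $\req(F^t \cap Y) = |Y| \cdot \alpha$, shows $\req(F^t \cap L) \geq |L| \cdot \alpha$. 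An averaging argument thus yields some leaf $v \in L$ with $\req(F^t \cap \{v\}) \geq \alpha$.

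Let $p$ denote the parent of $v$ in $Y$; it exists since $v$ is not the root. I would shift the latest $\req(F^t \cap \{v\}) - \alpha$ requests at $v$ up to $p$. This shift is legal (the field $F^t$ is negative, so requests may migrate to ancestors), it stays inside $Y$, and it leaves $\req(F^t \cap \{v\}) = \alpha$ after the shift. Conservation of the total mass inside $Y$ gives $\req(F^t \cap (Y \setminus \{v\})) = (|Y|-1) \cdot \alpha$, which settles condition~(1) of properness for $Y \setminus \{v\}$.

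For condition~(2), fix any tree cap $D$ of $Y \setminus \{v\}$ and any time $\tau \leq t$. Since $v$ is a leaf of $Y$ and $D \subseteq Y \setminus \{v\}$, the set $D$ is simultaneously a tree cap of $Y$. If $p \notin D$, the shift does not add mass to $D$, and properness of $Y$ applied to $D$ gives $\req(F^t_{\leq \tau} \cap D) \leq |D| \cdot \alpha$. If $p \in D$, then $D \cup \{v\}$ is also a tree cap of $Y$, and properness of $Y$ applied to $D \cup \{v\}$ yields $\req(F^t_{\leq \tau} \cap D) \leq (|D|+1) \cdot \alpha - \req(F^t_{\leq \tau} \cap \{v\})$. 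Under the latest-first rule, the shifted mass landing in $D$ at times $\leq \tau$ equals $\max(0,\, \req(F^t_{\leq \tau} \cap \{v\}) - \alpha)$; a short case split on whether $\req(F^t_{\leq \tau} \cap \{v\}) \geq \alpha$ or not then bounds the post-shift count by $|D| \cdot \alpha$ in either subcase.

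The step I expect to be most delicate is the subcase $p \in D$ with $\req(F^t_{\leq \tau} \cap \{v\}) < \alpha$: one might worry that mass shifted into $p$ spoils the prefix bound on $D$. The latest-first rule resolves this, since in that regime no request with time $\leq \tau$ has been shifted yet, so the post-shift count on $D$ agrees with the pre-shift one and is already $\leq |D| \cdot \alpha$ by properness of $Y$ applied to $D$. Combined with the averaging argument that locates a suitable leaf $v$, this shift order then delivers both properness conditions for $Y \setminus \{v\}$.
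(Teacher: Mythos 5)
Your overall plan---pick a leaf $v$ of $Y$ carrying at least $\alpha$ requests, move the surplus up to its parent $p$, and re-check the two properness conditions for $Y \setminus \{v\}$---is the paper's approach, and your case split for condition~(2) mirrors the paper's. There is, however, a genuine gap in how you dispose of condition~(1).

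You assert the shift ``stays inside $Y$'' and then invoke ``conservation of the total mass inside $Y$'' to conclude $\req(F^t \cap (Y \setminus \{v\})) = (|Y|-1)\cdot\alpha$. But $Y$ is only a set of nodes, whereas condition~(1) counts requests sitting in \emph{slots of the field} $F^t$. For a negative field, the slot $(p,r)$ lies in $F^t$ only when $\last_p(t)+1 \leq r \leq t$, i.e., only when $p$ is already in \ALG's cache at round~$r$. So when you move a request from $(v,r)$ to $(p,r)$, you must still argue that $p$ was cached at round~$r$; otherwise the shifted request leaves $F^t$, the sum $\req(F^t \cap Y)$ drops strictly below $|Y|\cdot\alpha$, and condition~(1) for $Y\setminus\{v\}$ fails. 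This is exactly what the paper flags as ``the main difficulty $\ldots$ keeping the legally shifted requests within the field they originally belonged to,'' and the central paragraph of its proof is devoted to it: letting $s$ be the round of the $(\alpha+1)$-th request to $v$ inside $F^t$, it shows $p$ is already cached at round~$s$---otherwise the branch of $Y$ from its root down to $v$ is a tree cap $D\subseteq Y$ whose only node present in the snapshot $F^t_{\leq s}$ is $v$, so property~(2) of properness would force $\req(F^t_{\leq s}\cap\{v\}) \leq \alpha$, contradicting the choice of~$s$. Because $p$ then stays cached from round~$s$ through time~$t$ (the subforest constraint forbids an intermediate eviction of $p$ without also evicting $v$), every shifted request lands in $F^t$, and only then does your conservation argument become valid. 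Without this step, the proof does not go through.

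A smaller slip: for condition~(2) you bound the post-shift count by $|D|\cdot\alpha$, but ``not over-requested'' means $\req(F^t_{\leq\tau}\cap D) \leq \size(F^t_{\leq\tau}\cap D)\cdot\alpha$, and $\size(F^t_{\leq\tau}\cap D)$ can be strictly smaller than $|D|$ when some nodes of $D$ are not yet cached at round~$\tau$. Your manipulation does go through if you replace $|D|$ with $\size(F^t_{\leq\tau}\cap D)$ throughout and note that $\size(F^t_{\leq\tau}\cap(D\cup\{v\})) = \size(F^t_{\leq\tau}\cap D)+1$ whenever shifted mass is present at times $\leq\tau$ (since then $v$ contributes slots to $F^t_{\leq\tau}$); as written, though, the inequality you state does not certify properness.
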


\begin{proof}
As $\req(F^t \cap Y) = |Y| \cdot \alpha$, \lref[Corollary]{cor:density}
implies that any leaf of $Y$ was requested at least $\alpha$ times
inside~$F^t$. We pick an arbitrary leaf $v$, and let $r \geq \alpha$ be the
number of requests to $v$ in $F^t$.

We look at all the requests to $v$ in $F^t$ ordered by their round. Let $s$ be
the round when $(\alpha+1)$-th of them arrives. We will now show that at round
$s$, \ALG already has $p(v)$ in its cache. If it had not, $\{v\}$ would be a
tree cap of $F^t_{\leq s}$, and by the first property of
\lref[Lemma]{lem:not_over-requested}, it would contain at most $\alpha$
requests, which is a~contradiction. Hence, if we shift the chronologically
last $r - \alpha$ requests from $v$ to $p(v)$, these requests stay within
$F^t$.

It remains to show that $Y \setminus \{v\}$ is proper after such a shift. We
choose any tree cap $D \subseteq Y$ and any time \mbox{$\tau \leq t$}. If $D$
does not contain $p(v)$ or $\tau < s$, then the number of requests in
$F^t_{\leq \tau} \cap D$ was not changed by the shift, and hence $F^t_{\leq
\tau} \cap D$ is not over-requested. Otherwise, $D \cup \{v\}$ was a tree cap
in $Y$ and by the lemma assumption, $F^t_{\leq \tau} \cap (D \cup \{v\})$ was
not over-requested. As $F^t_{\leq \tau} \cap D$ has now exactly $\alpha$ less
requests than $F^t_{\leq \tau} \cap (D \cup \{v\})$ had, it is not
over-requested, either.
\end{proof}

\begin{corollary}
\label{cor:crucial_lemma_neg}
For any negative field $F^t$, it is possible to legally shift its requests up,
so that they remain within $F^t$ and after the modification each node is
requested exactly $\alpha$ times.
\end{corollary}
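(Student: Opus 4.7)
The plan is to obtain the corollary by iterating Lemma \ref{lem:shift_up_and_stay_proper} in a bottom-up fashion, pinning one leaf at a time until every node of $X_t$ has exactly $\alpha$ requests.

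First I would verify the base case: $Y := X_t$ is itself a proper tree cap of $X_t$. The first properness condition, $\req(F^t \cap X_t) = |X_t|\cdot\alpha$, is immediate from Observation \ref{obs:field_requests}. The second condition, that no tree cap $D \subseteq X_t$ is over-requested in $F^t_{\leq \tau}$ for any $\tau \leq t$, is exactly the first conclusion of Lemma \ref{lem:not_over-requested}.

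Next I would iterate: while $Y \neq \emptyset$, apply Lemma \ref{lem:shift_up_and_stay_proper} to obtain a leaf $v$ of $Y$, perform the associated legal upward shift (from $v$ to $p(v)$, entirely inside $F^t$) that results in $\req(\{v\}) = \alpha$, then set $Y \gets Y \setminus \{v\}$; the lemma guarantees that the new $Y$ is again proper, so the invariant is maintained. Since $|Y|$ strictly decreases, the process terminates after exactly $|X_t|$ iterations with $Y = \emptyset$, pinning every node of $X_t$ to exactly $\alpha$ requests.

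The main point to verify, which I expect to be the only subtle step, is that counts pinned in earlier iterations are never disturbed later. This follows because every subsequent shift moves requests from a leaf of the then-current $Y$ \emph{into its parent inside $Y$}; in particular, a node $v$ that has already been removed from $Y$ is never the target of any future shift, so its count remains at $\alpha$ forever. Moreover, each individual shift is legal (leaf to parent is an ancestor move, matching the definition for negative requests) and confined to $F^t$, so the composition of all shifts across iterations is a legal upward shift of the requests of $F^t$ that stays within $F^t$. This gives precisely the conclusion of the corollary.
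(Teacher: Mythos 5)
Your proof is correct and matches the paper's intended argument: the paper also derives this corollary by starting from the proper tree cap $Y = X_t$ (properness following from Observation~\ref{obs:field_requests} and the first part of Lemma~\ref{lem:not_over-requested}) and then repeatedly applying Lemma~\ref{lem:shift_up_and_stay_proper} to peel off one leaf at a time. Your explicit check that shifts in later iterations never re-target an already-pinned node (because the shift in Lemma~\ref{lem:shift_up_and_stay_proper} moves requests only to $p(v) \in Y \setminus \{v\}$) is a detail the paper leaves implicit, but it is the same scheme.
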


%%%%%%%%%%%%%%%%%%%%%%%%%%%%%%%%%%%%%%%%%%%%%%%%%%%%%%%%%%%%%%%%%%%%%%%%%%%%%%%%%%%
%%%%%%%%%%%%%%%%%%%%%%%%%%%%%%%%%%%%%%%%%%%%%%%%%%%%%%%%%%%%%%%%%%%%%%%%%%%%%%%%%%%

\subsubsection{Shifting Positive Requests Down}
\label{sec:positive_shifting}

We will now focus on the problem of shifting the positive requests down in a
single positive field $F^t$, corresponding to a single fetch of \ALG at the
time $t$. Our goal is to devise a shifting strategy, that will result in at
least $\Omega(\size(F^t)/h(T))$ nodes having $\alpha/2$ requests each. While
this result may be suboptimal, deriving a shifting strategy for a~positive
field that would have the same equal distribution guarantee as the one
provided by \lref[Corollary]{cor:crucial_lemma_neg} is not possible 
(the details are presented in the full version of the paper).

First, we prove that from any node $v$ in the field, we can shift down a
constant fraction of its requests within the field, distributing them to
different nodes.

\begin{lemma}
\label{lem:downshift}
Let $F^t$ be a positive field and let $X_t$ be the corresponding changeset
fetched to the cache at time~$t$. Fix any node $v \in X_t$ that has been
requested at least $c \cdot (\alpha / 2)$ times in~$F^t$, where $c$ is an
integer. It is possible to shift down its requests to the nodes of $T(v) \cap
X_t$, so that these requests remain inside $F^t$ and $\lceil c / 2 \rceil$
nodes of $T(v)$ get $\alpha / 2$ requests each.
\end{lemma}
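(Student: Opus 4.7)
The approach is to exhibit an explicit shift and verify its legality via the density results already proved. A preliminary observation: applying \lref[Lemma]{lem:not_over-requested} (item 2) to the subtree $T(v)$ gives $\req(F^t \cap T(v)) \leq \alpha \cdot |T(v) \cap X_t|$, which together with the hypothesis $r_v \geq c \cdot (\alpha/2)$ immediately forces $|T(v) \cap X_t| \geq c/2$, and hence at least $\lceil c/2 \rceil$ candidate nodes are available. The remaining task is to pick $\lceil c/2 \rceil$ of them and saturate each with $\alpha/2$ of $v$'s requests.

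The steps I would carry out are: (i) verify the node-count bound above; (ii) sort the descendants of $v$ in $X_t$ by $\last_u(t)$ in ascending order, so that those with the longest lifespans come first; (iii) designate $v$ itself together with the $\lceil c/2 \rceil - 1$ descendants having the smallest $\last_u(t)$ as the target nodes $u_1, \ldots, u_{\lceil c/2\rceil - 1}$; (iv) keep $\alpha/2$ of $v$'s earliest requests at $v$, and assign the remaining requests in a nested fashion — the more lifespan-restricted a target descendant is, the later-round requests it receives; and (v) confirm that every shifted request's round exceeds $\last_{u_i}(t)$ of its assigned target, so the new slot belongs to $F^t$ and the shift is legal.

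The main obstacle is steps (iv)--(v): proving that $\alpha/2$ of $v$'s requests actually fit in the lifespan of each chosen target. I would establish this by embedding the chosen descendants into a tree cap $D$ of $X_t$, namely the union of the path from the root of $X_t$ down to $v$ together with the chosen targets and all nodes of $T(v) \cap X_t$ lying between them and $v$. Since $D$ is upward-closed in $X_t$ and contains the root of $X_t$, it is a valid tree cap, so \lref[Corollary]{cor:density} yields $\req(F^t \cap D) \geq \alpha \cdot |D|$; meanwhile, for any snapshot time $\tau$, the subtree bound of \lref[Lemma]{lem:not_over-requested} controls $\req(F^t_{\leq \tau} \cap D)$ from above by $\alpha$ times the number of $D$-nodes whose lifespan has started by~$\tau$. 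Subtracting these two inequalities at $\tau = \last_{u_i}(t)$ forces a lower bound on the requests in $D$ that occur strictly after $\last_{u_i}(t)$; a short counting step then shows that enough of these belong to $v$ (as opposed to other nodes of $D$) to realize the shift. The factor-of-$2$ loss, giving $\lceil c/2 \rceil$ filled nodes instead of $c$, comes precisely from the slack between these bounds, which matches the excerpt's remark that a tight positive-field analog of \lref[Corollary]{cor:crucial_lemma_neg} is not attainable.
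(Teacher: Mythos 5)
Your high-level plan matches the paper's: sort $T(v)\cap X_t$ in increasing order of $\last_u(t)$, pick $u_1=v$ and the next $\lceil c/2\rceil-1$ nodes, and hand later-round requests to the more constrained targets. The preliminary node-count bound in step (i) is also correct and is the same calculation the paper does implicitly. The divergence (and the gap) is in how you certify legality.

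Your proposed certificate uses a tree cap $D$ of $X_t$ that runs up to the root of $X_t$, pairs the lower bound from \lref[Corollary]{cor:density} with an upper bound on $\req(F^t_{\le\tau}\cap D)$, and subtracts. The problem is that the only upper bound you have available for positive fields, item~2 of \lref[Lemma]{lem:not_over-requested}, is stated for \emph{subtrees} $T'\subseteq T$, not for arbitrary tree caps. Your $D$ is emphatically not a subtree $T(u)$: it contains the path from the root of $X_t$ down to $v$ and only a selected subset of $T(v)\cap X_t$, so it is closed under ancestors but not under descendants. Applying the subtree bound to $T(u)$ for the root $u$ of $X_t$ gives you a bound on $\req(F^t_{\le\tau}\cap T(u))$, which strictly dominates $\req(F^t_{\le\tau}\cap D)$ on the wrong side; there is no legal specialization of \lref[Lemma]{lem:not_over-requested} that produces the bound you write. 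Moreover, even granting that bound, the subtraction controls only the total number of requests in $D$ after $\tau=\last_{u_j}(t)$, not the number at $v$; the ``short counting step'' that isolates $v$'s share is exactly where the work lies, and you have not carried it out.

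The paper avoids all of this. It never invokes \lref[Corollary]{cor:density} here and never leaves $T(v)$. For each $j$ let $\tau_j$ be the round of the $((j-1)\alpha+1)$-th request to $v$. If $u_j$ were still cached at $\tau_j$, then by the $\last$-ordering (which, by the subforest constraint, is consistent with the tree order, so $u_{j+1},\dots$ are cached too) the set $F^t_{\le\tau_j}\cap T(v)$ would have at most $j-1$ nodes but already at least $(j-1)\alpha+1$ requests at $u_1=v$ alone, making it over-requested and contradicting item~2 of \lref[Lemma]{lem:not_over-requested} applied to the genuine subtree $T(v)$. Hence $\last_{u_j}(t)<\tau_j$, so requests numbered $(j-1)\alpha+1$ through $(j-1)\alpha+\alpha/2$ land inside $F^t$ at $u_j$ after the shift. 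This single contradiction per $j$ replaces both the Corollary-based lower bound and your unfinished counting step. I'd suggest dropping the $D$ construction and re-deriving the claim along these lines.
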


\begin{proof}
We order the nodes $u_1, u_2, \ldots u_{|T(v) \cap X_t|}$ of $T(v) \cap X_t$,
so that $\last_{u_i}(t) \leq \last_{u_{i+1}}(t)$ for all $i$. In case of a
tie, we place nodes that are closer to $v$ first. Note that this linear
ordering is an extension of the partial order defined by the tree: the parent
of a~node cannot be evicted later than the node itself (otherwise the cache
would cease to be a subforest of $T$). In particular, it holds that $u_1 = v$.

We number $c \cdot (\alpha / 2)$ requests to $v$ chronologically, starting
from $1$. For any $j \in \{1, \ldots, \lceil c/2 \rceil \}$ we look at round
$\tau_j$ with the $((j-1) \cdot \alpha + 1)$-th request to $v$. When this
request arrives, node $u_j$ is already present in the cache. Otherwise, we
would have at least \mbox{$j \cdot \alpha + 1$} requests in $F^t_{\leq
{\tau_j}} \cap \{u_1, \ldots, u_j\}$ (already in $F^t_{\leq {\tau_j}}
\cap \{u_1\}$ alone), which would make it over-requested, and thus contradict
the second property of \lref[Lemma]{lem:not_over-requested}. Hence, we may
take requests numbered from $(j-1) \cdot \alpha + 1$ to $(j-1) \cdot \alpha +
\alpha/2$, shift them down from $v$ to $u_j$, and after such modification
these requests are still inside $F^t$. Note that for $j = 1$ requests are not
really shifted, as $u_1$ is $v$ itself. We perform such shift for any $j \in
\{1, \ldots, \lceil c/2 \rceil \}$, which yields the lemma.
\end{proof}
  
\begin{lemma}
\label{lem:crucial_lemma_pos}
For any positive field $F^t$, it is possible to legally shift its requests
down, so that they remain within $F^t$ and after the modification at least
$\size(F^t)/(2 h(T))$ nodes in $F^t$ have at least $\alpha/2$ requests each.
\end{lemma}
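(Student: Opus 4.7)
The plan is to combine Lemma~\ref{lem:downshift} with a counting argument that pigeonholes the available ``downshift potential'' across depth levels of $X_t$. For each $v \in X_t$ with $r_v$ requests in $F^t$, I define
\[
\psi(v) = \Bigl\lceil \lfloor 2 r_v/\alpha \rfloor / 2 \Bigr\rceil \text{ if } r_v \geq \alpha/2, \qquad \psi(v) = 0 \text{ otherwise,}
\]
so that by Lemma~\ref{lem:downshift} applied with $c = \lfloor 2 r_v/\alpha \rfloor$, a single invocation at $v$ creates $\psi(v)$ heavy nodes (i.e., receiving $\geq \alpha/2$ requests) inside $T(v) \cap X_t$. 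Before relying on this I would verify the applicability precondition $\lceil c/2 \rceil \leq |T(v) \cap X_t|$, which reduces to proving $r_v \leq \alpha \cdot |T(v) \cap X_t|$; the latter follows by applying Corollary~\ref{cor:density} to the tree cap $X_t \setminus (T(v) \cap X_t)$ of $X_t$ and combining with $\req(F^t) = \size(F^t) \cdot \alpha$ from Observation~\ref{obs:field_requests}.

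The first key step is the potential inequality $\sum_{v \in X_t} \psi(v) > \size(F^t)/2$. A direct case analysis gives $\psi(v) \geq r_v/\alpha - 1/2$ for every $v$ with $r_v \geq \alpha/2$. Writing $m = \size(F^t)$ and $H = \{v : r_v \geq \alpha/2\}$, the nodes outside $H$ contribute strictly less than $(\alpha/2)(m - |H|)$ requests in total, so $\sum_{v \in H} r_v > \alpha(m + |H|)/2$; summing $r_v/\alpha - 1/2$ over $H$ then yields $\sum_v \psi(v) > m/2$.

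The second step is to pigeonhole $\sum_v \psi(v)$ across the at most $h(T)$ depth levels occupied by nodes of $X_t$: some level $L^*$ satisfies $\sum_{v \in L^*} \psi(v) \geq m/(2 h(T))$. Because $L^*$ is an antichain in $T$, the subtrees $T(v) \cap X_t$ are pairwise disjoint for $v \in L^*$, so invoking Lemma~\ref{lem:downshift} \emph{in parallel} at every $v \in L^*$ with $r_v \geq \alpha/2$ acts on disjoint portions of $F^t$ (no interference among the shifts) and produces $\sum_{v \in L^*} \psi(v) \geq m/(2 h(T))$ pairwise distinct heavy nodes, as required.

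The main difficulty, already flagged by the authors, is that---unlike in the negative case of Corollary~\ref{cor:crucial_lemma_neg}, where every node can be forced to be heavy---positive fields admit configurations in which only an $\Theta(1/h(T))$ fraction of nodes can simultaneously be made heavy, which is why a pigeonhole over depth is unavoidable. The subtler technical points I anticipate are (a) verifying the precondition $\lceil c/2 \rceil \leq |T(v) \cap X_t|$ of Lemma~\ref{lem:downshift} uniformly across all $v$, and (b) justifying non-interference of the parallel invocations on $L^*$; both are handled cleanly by the tree-cap structure of $X_t$ supplied by Lemma~\ref{lem:no_over-requested_changesets} and Corollary~\ref{cor:density}.
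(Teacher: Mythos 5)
Your proof is correct and follows essentially the same route as the paper's: pigeonhole the ``heavy-node potential'' over the at most $h(T)$ depth layers of $X_t$, then apply Lemma~\ref{lem:downshift} in parallel at the nodes of the winning antichain, using disjointness of the subtrees to justify non-interference. The only differences are presentational: the paper groups each node's requests into chunks of $\alpha/2$ and counts groups (total $\geq |X_t|$) before halving, whereas you fold the halving into the definition of $\psi(v) = \lceil \lfloor 2r_v/\alpha\rfloor/2\rceil$ and prove $\sum_v \psi(v) > \size(F^t)/2$ first---an equivalent accounting; and you explicitly verify the precondition $\lceil c/2\rceil \leq |T(v)\cap X_t|$ of Lemma~\ref{lem:downshift} (which the paper leaves implicit, though it does follow from Lemma~\ref{lem:not_over-requested} or, as you note, from Corollary~\ref{cor:density} applied to $X_t \setminus T(v)$).
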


\begin{proof}
Let $X_t$ be the changeset corresponding to field $F^t$, which is fetched to the cache
at time~$t$. By \lref[Observation]{obs:field_requests}, $\req(F^t) = |X_t|
\cdot \alpha$. We gather the requests at every node into groups of $\alpha/2$
consecutive requests. In every node at most $\alpha/2$ requests remain not
grouped. Let $\overline\req(X)$ denote the number of grouped requests in the
set $X$. Clearly, $\overline\req(F^t) \geq |X_t| \cdot \alpha / 2$, i.e.,
there are at least $|X_t|$ groups of requests in set $X_t$.

Let $X_t = X_t^1 \sqcup X_t^2 \sqcup \dots \sqcup X_t^{h(T)}$ be a partition
of the nodes of the tree $X_t$ into layers according to their distance to the
root. By the pigeonhole principle, there is a layer $X_t^i$ containing at
least $\lceil |X_t| / h(T) \rceil$ groups of requests (each group has
$\alpha/2$ requests).

Nodes of $X_t^i$ are independent, i.e., for $u, v \in X_t^i$ the trees $T(u)$
and $T(v)$ are disjoint. Therefore, we may use the shifting strategy described
in \lref[Lemma]{lem:downshift} for each node of $X_t^i$ separately. After such
modification, at least $\lceil |X_t| / (2 h(T)) \rceil \geq \size(F_t) / (2
h(T))$ nodes have at least $\alpha / 2$ requests each.
\end{proof}

%%%%%%%%%%%%%%%%%%%%%%%%%%%%%%%%%%%%%%%%%%%%%%%%%%%%%%%%%%%%%%%%%%%%%%%%%%%%%%%%%%%
%%%%%%%%%%%%%%%%%%%%%%%%%%%%%%%%%%%%%%%%%%%%%%%%%%%%%%%%%%%%%%%%%%%%%%%%%%%%%%%%%%%

\subsubsection{Using Request Shifting for Bounding OPT}
\label{sec:lower-bound}

Finally, we may use our request shifting to relate $\size(\F) =
\sum_{F \in \mathcal{F}} \size(F)$ to the cost of $\OPT$ in a single phase $P$.
Recall that $k_P$ denotes the size of \ALG's cache at the end of $P$. We
assume that {\OPT} may start the phase with an arbitrary state of the cache.

\begin{lemma}
\label{lem:leftovers}
For any phase $P$, $\OPT(P) \geq (\size(\F) / (4 h(T)) - k_P)
\cdot \alpha/2$.
\end{lemma}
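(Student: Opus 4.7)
The plan is to apply the shifting corollaries to turn $P$ into a more structured phase $P'$ (with $\OPT(P') \leq \OPT(P)$ by Observation~\ref{obs:pprim_easier_than_p}), and then to lower-bound $\OPT(P')$ through per-node accounting of heavy blocks.

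First, I would apply Corollary~\ref{cor:crucial_lemma_neg} to every negative field (making each node receive exactly $\alpha$ requests) and Lemma~\ref{lem:crucial_lemma_pos} to every positive field (producing at least $\size(F)/(2h(T))$ \emph{heavy} nodes, each with $\geq \alpha/2$ requests). Declare a node $v$ heavy in a field $F$ if it has at least $\alpha/2$ requests in $F$ after shifting, and let $m_v$ be the number of fields in which $v$ is heavy. Since every node of every negative field is automatically heavy,
\[
  \sum_v m_v \;\geq\; \size(\F^-) + \frac{\size(\F^+)}{2h(T)} \;\geq\; \frac{\size(\F)}{2h(T)}.
\]

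Next, I would bound $\OPT(P')$ per node. Fix $v$ and list its heavy blocks $B_1, B_2, \ldots, B_{m_v}$ chronologically. In $v$'s underlying field sequence, positive and negative fields strictly alternate, and since every negative field is heavy, any two heavy positive blocks at $v$ are separated by at least one heavy negative block. Pairing consecutive heavy blocks (or, more robustly, matching each heavy positive block of $v$ with a distinct adjacent heavy negative block) yields $\lfloor m_v/2 \rfloor$ disjoint pairs, each consisting of one positive and one negative block, with weights at least $\alpha/2$ and $\alpha$ respectively. For each such pair, $\OPT$ must either change $v$'s cache state (paying $\alpha$) or keep it fixed (causing one of the two blocks to be in the wrong state, costing at least $\alpha/2$). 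Hence $\OPT$'s cost attributable to $v$ is at least $\lfloor m_v/2 \rfloor \cdot \alpha/2$.

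Summing and using $\lfloor m_v/2 \rfloor \geq (m_v - 1)/2$ for $m_v \geq 1$,
\[
  \OPT(P) \;\geq\; \OPT(P') \;\geq\; \sum_v \lfloor m_v/2\rfloor \cdot \tfrac{\alpha}{2} \;\geq\; \Bigl(\tfrac{\size(\F)}{2h(T)} - N\Bigr) \cdot \tfrac{\alpha}{4},
\]
where $N = |\{v : m_v \geq 1\}|$. The final step is to absorb the boundary term: one bounds the number of nodes whose pairing leaves an unpaired terminal heavy block by $2k_P$, by tracking how the parity and type of $v$'s last heavy block relate to whether $v$ lies in ALG's cache $C_{\enP}$ (of size $k_P$). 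Plugging $N \leq 2k_P$ gives $\OPT(P) \geq (\size(\F)/(4h(T)) - k_P) \cdot \alpha/2$.

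The main obstacle is making the pairing argument tight in the presence of runs of consecutive heavy negative blocks in $v$'s heavy subsequence (which arise whenever intermediate positive blocks are non-heavy): a naive chronological pairing may produce same-type pairs that contribute nothing, so the argument instead pairs each heavy positive block with an adjacent heavy negative block and amortizes residual heavy negatives against ALG's terminal cache $C_{\enP}$. Handling this boundary bookkeeping correctly, so that the residual is at most $k_P \cdot \alpha/2$, is the delicate part of the proof.
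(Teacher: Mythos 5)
Your high-level plan (shift to $P'$, per-node pairing of full positive with full negative blocks, charge $\alpha/2$ per mixed pair) is the same as the paper's, but two of your intermediate claims are genuinely false.

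First, the assertion that you can form $\lfloor m_v/2 \rfloor$ pairs \emph{each consisting of one positive and one negative block} is wrong. In $v$'s heavy-block sequence, every \pin period is heavy (by Corollary~\ref{cor:crucial_lemma_neg}) but most \pout periods may not be, so the sequence can be dominated by consecutive heavy negatives. If $v$ has, say, one heavy positive and three heavy negatives, $m_v = 4$ but you can form at most one mixed pair, not $\lfloor 4/2 \rfloor = 2$. You do flag this obstacle yourself, and your ``more robust'' fix---matching each heavy positive to an adjacent heavy negative---is sound, but it yields only $p_v$ pairs (the number of heavy positives, minus possibly one leftover), which is strictly less than $\lfloor m_v/2 \rfloor$ in general. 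The number of heavy positives is the real bottleneck, and your accounting via $\sum_v m_v$ (which is dominated by the negatives, $\size(\F^-) = p^\pin$) conflates the two types and leads to a count you cannot realize.

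Second, the claim $N \le 2k_P$ is false. $N$ is the number of nodes with at least one heavy block. Since every \pin period is heavy, every node that was ever evicted during the phase contributes to $N$, and this set can be far larger than ALG's terminal cache $C_{\enP}$: a long unfinished phase can fetch and evict many disjoint changesets so that thousands of nodes have $m_v \ge 1$ while $k_P \le \kALG$ stays small. There is no parity/type bookkeeping that will bound $N$ by $O(k_P)$, so the boundary term cannot be absorbed this way.

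What makes the paper's proof go through is a different piece of counting that your proposal never invokes: since every node's period sequence starts with \pout, we have $p^\pout \ge p^\pin$, hence $p^\pout \ge \size(\F)/2$. The paper then counts full \emph{non-leftover} \pout periods (at least $p^\pout/(2h(T)) - k_P$), observes each is followed by an always-full \pin period, and charges OPT $\alpha/2$ per such pair. The only subtraction is of the $k_P$ leftover periods, not of all $N$ nodes. Replacing your $\sum_v m_v$ tally with a count of full \pout periods and using $p^\pout \ge \size(\F)/2$ is precisely the missing step.
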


\begin{proof}
We transform $P$ using legal shifts that are described in 
\lref[Section]{sec:negative_shifting} and
\lref[Section]{sec:positive_shifting}. That is, we create a~corresponding
phase $P'$ that satisfies both
\lref[Corollary]{cor:crucial_lemma_neg} and
\lref[Lemma]{lem:crucial_lemma_pos}. 
By \lref[Observation]{obs:pprim_easier_than_p}, it is sufficient to show that
$\OPT(P') \geq (\size(\F) / (4 h(T)) - k_P) \cdot \alpha/2$.

\begin{figure}[t]
\centering
\includegraphics[width=0.9\columnwidth,keepaspectratio]{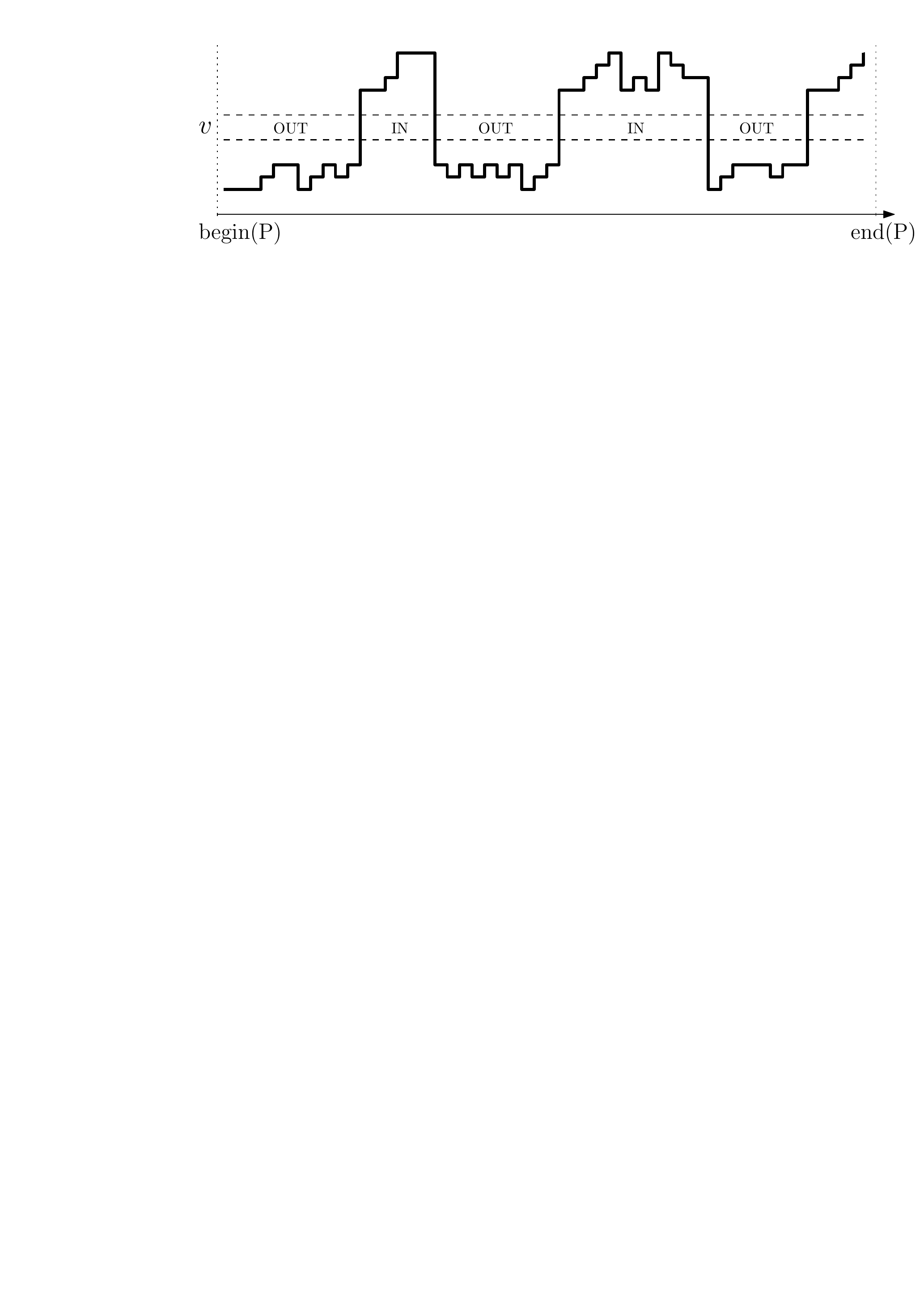}
\caption{Partitioning of the phase into interleaving \pin and \pout periods
for node $v$. The thick line represents cache contents. The \emph{leftover}
\pout period (the last one) is present for node $v$ as it has finished phase
$P$ inside \ALG's cache. The periods can be followed by requests contained in
$F^\infty$.}
\label{fig:leftover}
\end{figure}

We focus on a single node $v$.  We cut its history into interleaved periods:
\pout \emph{periods}, when $v$ is outside the cache and receives positive 
requests, and \pin \emph{periods} when \ALG keeps $v$ in the cache and $v$
receives negative requests. A final (possibly empty) part corresponding to the
time when $v$ is in the $F^\infty$ field is not accounted in \pout or \pin
periods, i.e., each \pin or \pout period corresponds to some field $F \in \F$.
Let $p^\pin$ and $p^\pout$ denote the total number of \pin and \pout periods
(respectively) for all nodes during the phase. An~example is given
in~\lref[Figure]{fig:leftover}.

Recall that \ALG starts each phase with an empty cache, and hence each node
starts with an \pout period. For $k_P$ nodes that are in {\ALG}'s cache at the
end of the phase (and only for them) their history ends with an \pout period
not followed by an \pin period. We call them \emph{leftover periods}. Thus,
$p^\pout = p^\pin + k_P$. The total number of periods ($p^\pin + p^\pout$) is
equal to the total size of all \emph{fields}, $\size(\F)$, and thus $p^\pout 
\geq \size(\F) / 2$.

We call a period \emph{full} if it has at least $\alpha/2$ requests. The
shifting strategies described in the previous section ensure that all
\pin periods are full and at least $1/(2 h(T))$ of all \pout periods are full.
Thus, there are at least $p^\pout/(2 h(T)) - k_P$ full non-leftover \pout
periods; each of them together with the following \pin period constitutes a
\emph{full \pout-\pin pair}.

\OPT has to pay at least $\alpha/2$ for the node in the course of the history
described by a~full \pout-\pin pair: it pays $\alpha$ either for changing the
cached/non-cached state of a node, or $\alpha/2$ for all positive requests or
$\alpha/2$ for all negative ones. Thus, $\OPT(P') \geq ( p^\pout / (2 h(T)) -
k_P ) \cdot \alpha/2 \geq ( \size(\F) / (4 h(T)) - k_P ) \cdot \alpha/2$.
\end{proof}

%%%%%%%%%%%%%%%%%%%%%%%%%%%%%%%%%%%%%%%%%%%%%%%%%%%%%%%%%%%%%%%%%%%%%%%%%%%%%%%%%%%
%%%%%%%%%%%%%%%%%%%%%%%%%%%%%%%%%%%%%%%%%%%%%%%%%%%%%%%%%%%%%%%%%%%%%%%%%%%%%%%%%%%

\subsection{Competitive Ratio}
\label{sec:comp_ratio}

To relate the cost of \OPT to \ALG in a single phase $P$, we still need to
upper-bound $\req (F^\infty)$ and relate $k_P \cdot \alpha$ to the cost of
$\OPT$ (i.e., compare the bounds on \ALG and \OPT provided by
\lref[Lemma]{lem:alg_cost} and \lref[Lemma]{lem:leftovers}, respectively).

For the next two lemmas, we define $\VOPT$ as the set of all nodes that were
in \OPT cache at some time of~$P$ and let $\VOPTC = T \setminus \VOPT$. Note
that $\VOPT$ is a union of subforests (nodes present in \OPT's cache at
consecutive times), and hence a subforest itself.

\begin{lemma}
\label{lem:f_infty}
For any phase $P$, it holds that $\req (F^\infty) \leq 2 \cdot \kALG \cdot
\alpha + 2 \cdot \OPT(P)$.
\end{lemma}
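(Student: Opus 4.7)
The plan is to rewrite $\req(F^\infty) = \sum_{v \in T} \cnt_{\enP}(v)$ --- since the requests to~$v$ remaining in $F^\infty$ are precisely those accumulated after $v$'s last state change in~$P$, which equals $\cnt_{\enP}(v)$ --- and then partition the tree as $T = C_{\enP} \sqcup (\VOPT \setminus C_{\enP}) \sqcup (T \setminus (\VOPT \cup C_{\enP}))$, bounding each piece using \lref[Lemma]{lem:no_over-requested_changesets}. For $C_{\enP}$ itself: it is trivially a valid negative changeset for itself, so \lref[Lemma]{lem:no_over-requested_changesets} gives $\cnt_{\enP}(C_{\enP}) \leq |C_{\enP}| \cdot \alpha \leq \kALG \cdot \alpha$.

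For $\VOPT \setminus C_{\enP}$, the key observation is that $\VOPT$ is a union of subforests --- \OPT's cache at any single time is a subforest --- and hence is itself a subforest (if $u$ lies in \OPT's cache at some time, then all of $T(u)$ lies there too and so is contained in $\VOPT$). Therefore $\VOPT \cup C_{\enP}$ is also a subforest (the union of two subforests is a subforest), which makes $\VOPT \setminus C_{\enP}$ a valid positive changeset for $C_{\enP}$. \lref[Lemma]{lem:no_over-requested_changesets} then yields $\cnt_{\enP}(\VOPT \setminus C_{\enP}) \leq |\VOPT \setminus C_{\enP}| \cdot \alpha$, and I would convert this size into an \OPT bound via $|\VOPT| \leq \kOPT + \OPT(P)/\alpha$: the inequality holds because $\VOPT$ gains nodes only from \OPT's initial cache (of size at most~$\kOPT$) or through \OPT's fetches (each of which costs $\alpha$ per node). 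Hence $\cnt_{\enP}(\VOPT \setminus C_{\enP}) \leq \kOPT \cdot \alpha + \OPT(P) \leq \kALG \cdot \alpha + \OPT(P)$.

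For $v \in T \setminus (\VOPT \cup C_{\enP})$: since $v$ is outside \ALG's cache throughout the interval following its last state change, every request to $v$ counted in $F^\infty$ is positive (by the assumption on request types); and since $v$ is never in \OPT's cache, each such positive request costs \OPT exactly~$1$. Thus $\cnt_{\enP}(v)$ is bounded by \OPT's positive-request cost on $v$, and summing over these nodes gives $\leq \OPT(P)$. Adding the three bounds yields $\req(F^\infty) \leq 2\kALG \cdot \alpha + 2\OPT(P)$, as required. The main conceptual step is choosing $\VOPT$ (the union of \OPT's caches over the whole phase) rather than just \OPT's end-of-phase cache as the positive changeset paired with $C_{\enP}$: this absorbs every node \OPT ever touched, while the simple fetch-counting bound on $|\VOPT|$ converts the size term cleanly into an $\OPT(P)$ contribution.
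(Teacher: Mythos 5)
Your proof is correct, and it follows the same underlying three-way decomposition as the paper's: nodes in \ALG's final cache $C_{\enP}$, nodes \OPT ever cached ($\VOPT$), and nodes outside both. Each piece is bounded by \lref[Lemma]{lem:no_over-requested_changesets} together with the observation that \OPT pays $1$ for every positive request to a node of $\VOPTC$, and the count $|\VOPT|\le \kOPT + \OPT(P)/\alpha$ converts the size term into an \OPT bound exactly as in the paper.

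What you do differently --- and it is a clean simplification --- is that you sum $\cnt_{\enP}(v)$ over the \emph{entire} node set and apply the general inequality of \lref[Lemma]{lem:no_over-requested_changesets} directly to the cache state $C_{\enP}$ (i.e., before any change at time $\enP$). The paper instead splits $F^\infty$ into $F^\infty_-$ and $F^\infty_+$, where $F^\infty_+$ is explicitly defined to exclude the artificial changeset $X_{\enP}$ fetched at $\enP$, and bounds $\req(F^\infty_+\cap\VOPT)$ by appealing to the maximality of $X_{\enP}$; that framing forces a separate case analysis for unfinished phases that end in an eviction, where there is no fetch to which maximality applies. By never referencing $X_{\enP}$, your argument handles finished and unfinished phases uniformly. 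One small nit: the opening claim $\req(F^\infty)=\sum_{v\in T}\cnt_{\enP}(v)$ should really be a ``$\le$'' --- for $v\in X_{\enP}$ the counter at time $\enP$ accounts for requests in $F^{\enP}$, not $F^\infty$ --- but since you only ever use the inequality in that direction, this does not affect the proof.
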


\begin{proof}
We assume first that $P$ is a finished phase. Then, $P$ ends with an
artificial fetch of $X_{\enP}$ at time $\enP$ (followed by the final eviction).
We split $F^\infty$ into two disjoint parts (see \lref[Figure]{fig:fields}):
\begin{align*}
  F^\infty_- = &\; \{(v, t): v \in C_{\enP}, t \geq \last_v(\enP)\}, \\
  F^\infty_+ = &\; \{(v, t): v \notin C_{\enP} \sqcup X_{\enP}, \,
           t \geq \last_v(\enP)\}.
\end{align*}
Note that $F^\infty_-$ contains only negative requests and $F^\infty_+$ only
positive ones. As $\req(F^\infty) = \req(F^\infty_-) + \req (F^\infty_+ \cap
\VOPTC) + \req (F^\infty_+ \cap \VOPT)$, we estimate each of these summands
separately.
\begin{itemize}
\item
Nodes from $F^\infty_-$ are in the cache $C_{\enP}$ and were not
evicted from the cache. Thus, $\req(F^{\infty}_-) \leq |C_{\enP}| \cdot \alpha
\leq \kALG \cdot \alpha$.
\item
All the requests from $\VOPTC$ are paid by \OPT, and hence
$\req(F^\infty_+ \cap \VOPTC) \leq \req(\VOPTC) \leq \OPT(P)$.
\item
$F^\infty_+$ is a valid changeset for cache $C_{\enP} \sqcup X_{\enP}$.
As $\VOPT$ is a subforest of $T$, $F^\infty_+ \cap \VOPT$ is also a valid
changeset for the cache $C_{\enP} \sqcup X_{\enP}$. Therefore, $\req
(F^\infty_+ \cap \VOPT) \leq \size(F^\infty_+ \cap \VOPT) \cdot \alpha$, as
otherwise the set fetched at time $\enP$ would not be maximal. (\ALG could
then fetch $X_{\enP} \sqcup (F^\infty_+ \cap \VOPT)$ instead of $X_{\enP}$.)
Thus, $\req (F^\infty_+ \cap \VOPT) \leq |\VOPT| \cdot \alpha = \kOPT \cdot
\alpha + (|\VOPT| - \kOPT)
\cdot \alpha \leq \kALG \cdot \alpha + \OPT(P)$.
The last inequality follows as --- independently of the initial state --- \OPT
needs to fetch at least $|\VOPT| - \kOPT$ nodes to the cache during $P$.
\end{itemize}
Hence, in total, $\req (F^\infty) \leq 2 \cdot \kALG \cdot
\alpha + 2 \cdot \OPT(P)$ for a finished phase $P$.

We note that if there was no cache change at $\enP$, the analysis above would
hold with $X_{\enP} = \emptyset$ with virtually no change. Therefore, for an
unfinished phase $P$ ending with a fetch or ending without cache change at
$\enP$, the bound on $\req(F^\infty)$ still holds. However, if an unfinished
phase~$P$ ends with an eviction, then we look at the last eviction-free 
time $\tau$ of~$P$. We now observe the evolution of field
$F^\infty$ from time~$\tau$ till $\enP$. At time $\tau$, $\req(F^\infty) \leq
2 \cdot \kALG \cdot \alpha + 2 \cdot \OPT(P)$. Furthermore, in subsequent
times, it may only decrease: at any round $F^\infty$ gets an additional
request, but on eviction $\req(F^\infty)$ decreases by $\alpha$ times
the number of evicted nodes (i.e., at least by $\alpha \geq 1$). Hence, the
value of $\req(F^\infty)$ at $\enP$ is also at most $2 \cdot \kALG \cdot
\alpha + 2 \cdot \OPT(P)$.
\end{proof}

By combining \lref[Lemma]{lem:alg_cost}, \lref[Lemma]{lem:leftovers} and 
\lref[Lemma]{lem:f_infty}, we immediately obtain the following corollary
(holding for both finished and unfinished phases).

\begin{corollary}
\label{cor:any_phase_bound}
For any phase $P$, it holds that 
$\ALG(P) \leq O(h(T)) \cdot \OPT(P) + O(h(T) \cdot (k_P + \kALG) \cdot \alpha)$.
\end{corollary}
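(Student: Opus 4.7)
The plan is to chain the three cited lemmas together: Lemma~\ref{lem:alg_cost} gives an upper bound on $\ALG(P)$ in terms of $\size(\F)$, $\req(F^\infty)$, and $k_P$; Lemma~\ref{lem:f_infty} lets me eliminate $\req(F^\infty)$ in favor of $\OPT(P)$ and $\kALG\cdot\alpha$; and Lemma~\ref{lem:leftovers} (rearranged) lets me eliminate $\size(\F)$ in favor of $\OPT(P)$ and $k_P$. After these substitutions everything collapses to the claimed form.

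Concretely, I would first substitute the bound $\req(F^\infty) \leq 2\kALG\alpha + 2\OPT(P)$ from Lemma~\ref{lem:f_infty} into the bound $\ALG(P) \leq 2\alpha\cdot\size(\F) + \req(F^\infty) + k_P\cdot\alpha$ of Lemma~\ref{lem:alg_cost}. This yields
\[
\ALG(P) \;\leq\; 2\alpha\cdot\size(\F) \;+\; 2\kALG\alpha \;+\; 2\OPT(P) \;+\; k_P\alpha.
\]
Next, I would rearrange the inequality of Lemma~\ref{lem:leftovers}, namely $\OPT(P) \geq (\size(\F)/(4h(T)) - k_P)\cdot \alpha/2$, to isolate $\size(\F)$, obtaining $\size(\F) \leq 8h(T)\cdot\OPT(P)/\alpha + 4h(T)\cdot k_P$. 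Plugging this in gives
\[
\ALG(P) \;\leq\; 16\,h(T)\cdot\OPT(P) + 8\,h(T)\cdot k_P\alpha + 2\kALG\alpha + 2\OPT(P) + k_P\alpha.
\]
Collecting the $\OPT(P)$ terms into an $O(h(T))\cdot \OPT(P)$ contribution and the additive terms into $O(h(T)\cdot(k_P + \kALG)\cdot\alpha)$ (using $h(T) \geq 1$ to absorb the bare $\kALG\alpha$ and $k_P\alpha$ terms) yields exactly the claimed bound.

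I do not expect any real obstacle here: all the heavy lifting has been done in the preceding sections, in particular the request-shifting arguments underlying Lemma~\ref{lem:leftovers} and the maximality-based accounting underlying Lemma~\ref{lem:f_infty}. The only thing to be slightly careful about is that the bound must cover both finished and unfinished phases uniformly; this is already guaranteed by the corresponding generality of Lemmas~\ref{lem:alg_cost}, \ref{lem:leftovers}, and \ref{lem:f_infty} (for the last of which the case analysis on the structure of the end of an unfinished phase has already been dispatched), so no additional case distinction is needed at this step.
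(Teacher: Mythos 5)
Your proof is correct and follows exactly the paper's approach: the paper states the corollary follows immediately by combining Lemmas~\ref{lem:alg_cost}, \ref{lem:leftovers}, and~\ref{lem:f_infty}, and you simply carry out that substitution and the resulting arithmetic explicitly.
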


Using the corollary above, its remains to bound the value of~$k_P$. This is
easy for an unfinished phase, as $k_P \leq \kALG$ there. For a~finished phase,
we provide another bound.

\begin{lemma}
\label{lem:opt_bound2}
For any finished phase $P$, it holds that
$k_P \cdot \alpha \leq \OPT(P) \cdot (\kALG + 1) / (\kALG + 1 - \kOPT)$.
\end{lemma}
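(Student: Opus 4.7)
The plan is to adapt the Sleator--Tarjan argument for paging with resource augmentation. A finished phase plays the role of a paging round in which \ALG has effectively used $\kALG+1$ cache slots, and the target ratio $(\kALG+1)/(\kALG+1-\kOPT)$ coincides with the Sleator--Tarjan ratio for cache sizes $\kALG+1$ and $\kOPT$, so the heavy lifting is to transport that bound into our tree-with-bypassing setting.

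First I would partition $C_{\enP}$ by last-fetch time: $C_{\enP} = \bigsqcup_i S^{t_i}$, where $S^{t_i}$ is the set of nodes of $C_{\enP}$ whose last fetch by \ALG occurred at time $t_i$. Because $C_{\enP}$ is a subforest and because the subforest property forbids evicting an ancestor in $X^{t_i}$ while a descendant still sits in \ALG's cache, each $S^{t_i}$ is a tree cap of the positive changeset $X^{t_i}$. For each $v \in C_{\enP}$ let $r_v$ denote the number of positive requests to $v$ between $\last_v(t_v)$ and its last-fetch time $t_v$; applying \lref[Corollary]{cor:density} to each $S^{t_i}$ yields $\sum_{v \in S^{t_i}} r_v \geq \alpha |S^{t_i}|$, and summing over $i$ gives $\sum_{v \in C_{\enP}} r_v \geq \alpha k_P$.

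I would then amortise \OPT's cost across the $k_P$ nodes of $C_{\enP}$: list them chronologically by last-fetch time and charge each node the fractional cost $\alpha(\kALG+1-\kOPT)/(\kALG+1)$; the sum of charges equals the claimed lower bound, so it suffices to prove that every consecutive block of $\kALG+1$ such nodes extracts at least $\alpha(\kALG+1-\kOPT)$ from \OPT. A block contains $\kALG+1$ \emph{distinct} nodes of $C_{\enP}$, while \OPT's cache holds at most $\kOPT$ of them at any single instant; as in the classical argument, \OPT must therefore either fetch at least $\kALG+1-\kOPT$ of them (paying $\alpha$ per fetch) or else bypass the $r_v$-mass attached to those it never caches.

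The main obstacle is converting this averaged per-block statement into a pointwise guarantee, because within a single positive changeset $X^{t_i}$ the $\alpha |X^{t_i}|$ requests may concentrate on a few nodes rather than spread uniformly. The saving observation is that \OPT's cache is a subforest, so sheltering a node $v$ costs $|T(v)|$ of its $\kOPT$ slots. Nodes with $|T(v)| > \kOPT$ are therefore never covered and their bypass cost is paid in full; for nodes with $|T(v)| \leq \kOPT$ the third claim of \lref[Lemma]{lem:no_over-requested_changesets}, applied to the minimal valid changeset containing $v$, caps $r_v$ by $|T(v)|\alpha \leq \kOPT \alpha$. Pairing this upper bound on how much \OPT can hope to shelter with the lower bound from \lref[Corollary]{cor:density} on how much mass \ALG has actually accumulated is what drives the per-block inequality, and hence the lemma.
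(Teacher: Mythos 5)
Your proposal takes a genuinely different route from the paper's, and unfortunately it has a gap that is not resolved.

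The paper's argument is a simple global counting scheme: let $\VOPT$ be the set of all nodes that ever appear in $\OPT$'s cache during $P$. Since any positive changeset $X_t$ is a tree cap and $\VOPT$ is a subforest, $X'_t := X_t \setminus \VOPT$ is again a tree cap, so \lref[Corollary]{cor:density} yields at least $|X'_t|\cdot\alpha$ positive requests outside $\VOPT$ within field $F^t$. Summing over all positive changesets gives at least $(k_P - |\VOPT|)\cdot\alpha$ requests that $\OPT$ must pay for in full (bypass cost), while $\OPT$ must also pay at least $(|\VOPT|-\kOPT)\cdot\alpha$ in fetches regardless of its initial cache. Adding these gives $\OPT(P)\geq(k_P-\kOPT)\alpha$, from which the lemma follows because $k_P\geq\kALG+1$ implies $k_P/(k_P-\kOPT)\leq(\kALG+1)/(\kALG+1-\kOPT)$. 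Crucially, this argument never needs to time-align $\OPT$'s cache with $\ALG$'s fetch events, and never needs a per-node lower bound on the request count.

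Your Sleator--Tarjan block argument does need such alignment, and that is exactly where it breaks. You list the $k_P$ cached nodes by last-fetch time and want every block of $\kALG+1$ consecutive nodes to cost $\OPT$ at least $(\kALG+1-\kOPT)\alpha$. In classical paging this works because every node in a block is requested (once) and $\OPT$ can shelter at most $\kOPT$ of the $\kALG+1$ distinct pages at any one instant; the remaining ones each cost a fetch. Here, however, the fetches of $\ALG$ occur at widely separated times, $\OPT$'s cache can change arbitrarily in between, and the ``bypass'' fallback you invoke requires that each uncovered node carry at least $\alpha$ (or some constant fraction of $\alpha$) worth of positive requests. \lref[Corollary]{cor:density} gives only the \emph{average} bound $\sum_{v\in S^{t_i}} r_v \geq \alpha|S^{t_i}|$; individual $r_v$ can be $0$. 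Your proposed fix --- bound $r_v\leq|T(v)|\alpha$ and argue that sheltering $v$ eats $|T(v)|$ of $\OPT$'s $\kOPT$ slots --- describes a tension but does not produce the per-block inequality: nothing prevents the $\alpha|S^{t_i}|$ mass from concentrating entirely on nodes with large subtrees that $\OPT$ happens to hold at the right moments. The last paragraph of the proposal asserts that ``pairing'' the two bounds ``drives'' the inequality, but no such derivation is given, and I do not see how to supply one without redoing the argument along the paper's $\VOPT$/$\VOPTC$ split. The split is precisely what turns the averaged bound from \lref[Corollary]{cor:density} into something usable: by intersecting each $X_t$ with the complement of $\VOPT$ you get a tree cap on which the average bound applies and whose requests $\OPT$ pays for with certainty.
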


\begin{proof}
First, we compute the number of positive requests in $\VOPTC$. Let $X_{t_1},
X_{t_2}, \ldots, X_{t_s}$ be all positive changesets applied by \ALG in~$P$.
For any~$t$, let $X'_t = X_t \setminus \VOPT$. As $X_t$ is some tree cap and
$\VOPT$ is a~subforest of $T$, $X'_t$ is a~tree cap of $X_t$. By
\lref[Corollary]{cor:density}, the number of requests to nodes of $X'_t$ in
field $F^t$ is at least $|X'_t| \cdot \alpha$. These requests for different
changesets $X_t$ are disjoint and they are all outside of $\VOPT$. Hence the
total number of positive requests outside of $\VOPT$ is at least $\sum_{i=1}^s
|X'_{t_i}| \cdot \alpha$, where $\sum_{i=1}^s |X'_{t_i}| \geq |\bigcup_{i=1}^s
X'_{t_i}| = |(\bigcup_{i=1}^s X_{t_i}) \setminus \VOPT| \geq |\bigcup_{i=1}^s
X_{t_i}| - |\VOPT| \geq k_P - |\VOPT|$.

Now $\OPT(P)$ can be split into the cost associated with nodes from $\VOPT$
and $\VOPTC$, respectively. For the former part,
\OPT has to pay at least $(|\VOPT| - \kOPT) \cdot \alpha$ for the fetches
alone. For the latter part, it has to pay $1$ for each of at least $(k_P -
|\VOPT|) \cdot \alpha$ positive requests outside of $\VOPT$. Hence, $\OPT(P)
\geq (|\VOPT| - \kOPT) \cdot \alpha + (k_P - |\VOPT|) \cdot \alpha = (k_P -
\kOPT) \cdot \alpha$. Then, $k_P \cdot \alpha \leq k_P \cdot \OPT(P) / (k_P -
\kOPT)$. As the phase is finished, $k_P \geq \kALG + 1$, and thus $k_P \cdot
\alpha \leq (\kALG + 1) \cdot \OPT(P) / (\kALG + 1 - \kOPT)$.
\end{proof}

\begin{theorem}
The algorithm \ALG is $O(h(T) \cdot \kALG/(\kALG-\kOPT+1))$-competitive.
\end{theorem}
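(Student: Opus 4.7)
The plan is to combine Corollary~\ref{cor:any_phase_bound} with Lemma~\ref{lem:opt_bound2} on a phase-by-phase basis and then sum over all phases. First I would fix the partition of the input $I$ into phases $P_1, P_2, \ldots, P_m$ produced by \ALG, noting that at most one phase (the last one) can be unfinished, while all others are finished.

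For any finished phase $P$, I would invoke Lemma~\ref{lem:opt_bound2} to bound $k_P \cdot \alpha \leq O(R) \cdot \OPT(P)$ with $R = \kALG/(\kALG - \kOPT + 1)$ (using that $(\kALG+1)/(\kALG+1-\kOPT) = \Theta(R)$). Since the phase is finished we also have $k_P \geq \kALG + 1$, so $\kALG \cdot \alpha \leq k_P \cdot \alpha \leq O(R) \cdot \OPT(P)$ as well. Plugging these two inequalities into Corollary~\ref{cor:any_phase_bound} yields
\[
  \ALG(P) \;\leq\; O(h(T)) \cdot \OPT(P) + O(h(T)) \cdot (k_P + \kALG) \cdot \alpha
         \;\leq\; O(h(T) \cdot R) \cdot \OPT(P),
\]
which is the desired per-phase bound.

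For the at-most-one unfinished phase $P^\ast$, the same calculation fails because we have no lower bound on $k_{P^\ast}$; however, the definition of an unfinished phase forces $k_{P^\ast} \leq \kALG$, so Corollary~\ref{cor:any_phase_bound} directly gives $\ALG(P^\ast) \leq O(h(T)) \cdot \OPT(P^\ast) + O(h(T) \cdot \kALG \cdot \alpha)$. Setting $\beta := O(h(T) \cdot \kALG \cdot \alpha)$ absorbs this into an additive constant (independent of $I$).

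Finally I would sum over all phases and use the fact that $\sum_P \OPT(P) \leq \OPT(I)$, which holds because in each $\OPT(P)$ we allow \OPT\ to start from an arbitrary cache state (so concatenating per-phase optimal solutions is a feasible global strategy of cost $\sum_P \OPT(P)$). This gives $\ALG(I) \leq O(h(T) \cdot R) \cdot \OPT(I) + \beta$, establishing the claimed competitive ratio. The only mildly delicate step is the last one: one has to be careful that the two bounds we use for finished vs.\ unfinished phases are compatible, i.e.\ that the single additive term $\beta$ suffices regardless of which phase is unfinished; this is immediate since $\beta$ depends only on $h(T)$, $\kALG$, and $\alpha$, all of which are independent of $I$.
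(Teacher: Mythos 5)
Your proposal is correct and follows essentially the same route as the paper's proof: decompose $I$ into phases, apply \lref[Corollary]{cor:any_phase_bound} together with \lref[Lemma]{lem:opt_bound2} (using $k_P \geq \kALG + 1$) for finished phases, handle the single unfinished phase via $k_P \leq \kALG$ to get an additive term, and sum using $\sum_P \OPT(P) \leq \OPT(I)$. The only difference is that you spell out more explicitly why the per-phase OPT costs sum to at most $\OPT(I)$, which the paper leaves implicit.
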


\begin{proof}
Let $R = h(T) \cdot \kALG/(\kALG-\kOPT+1)$. We split an input~$I$ into a
sequence of finished phases followed by a single unfinished phase (which may
not be present). For a~finished phase $P$, we have $k_P > \kALG$, and hence
\lref[Corollary]{cor:any_phase_bound} and \lref[Lemma]{lem:opt_bound2}
imply that $\ALG(P) \leq O(R) \cdot \OPT(P)$. For an unfinished phase $k_P
\leq \kALG$, and therefore, by \lref[Corollary]{cor:any_phase_bound}, $\ALG(P)
\leq O(h(T)) \cdot \OPT(P) + O(h(T) \cdot \kALG \cdot \alpha)$. Summing over
all phases of $I$ yields $\ALG(I) \leq O(R) \cdot \OPT(I) + O(h(T) \cdot \kALG
\cdot \alpha)$.
\end{proof}

%%%%%%%%%%%%%%%%%%%%%%%%%%%%%%%%%%%%%%%%%%%%%%%%%%%%%%%%%%%%%%%%%%%%%%%%%%%%%%%%%%%
%%%%%%%%%%%%%%%%%%%%%%%%%%%%%%%%%%%%%%%%%%%%%%%%%%%%%%%%%%%%%%%%%%%%%%%%%%%%%%%%%%%

\section{Implementation of TC}\label{sec:implementing_counters}

Recall that at each time $t$, \ALG verifies the existence of a valid changeset
that satisfies saturation and maximality properties (see the definition of
\ALG in \lref[Section]{sec:algo}). Here, we show that this operation can be
performed efficiently. In particular, in the following two subsections, we
will prove the following theorem.

\begin{theorem}
\ALG can be implemented using $O(|T|)$ additional memory, so that to make a
decision at time~$t$, it performs $O(h(T) + \max \{ h(T), \degree(T) \} \cdot |X_t|)$ operations,
where $\degree(T)$ is a maximum node degree in $T$ and 
$X_t$ is the changeset applied at time $t$ ($|X_t| = 0$ if no changeset is
applied). 
\end{theorem}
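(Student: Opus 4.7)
The plan is to augment \ALG with a data structure that makes the per-round test efficient, using constant extra memory per tree node. Each node $v$ stores $\cnt(v)$, its cache-state flag, and an aggregate \emph{potential} $\Phi(v)$ of two flavors depending on $v$'s current state. For non-cached $v$, I set $\Phi(v) = \sum_{u \in T(v) \setminus C}(\cnt(u)-\alpha)$, which equals $\cnt(X) - |X|\cdot \alpha$ for the unique valid positive changeset rooted at $v$, namely $X = T(v) \setminus C$ (the subforest property forces any positive changeset through $v$ to include every non-cached descendant of $v$). For cached $v$, I set $\Phi(v) = (\cnt(v) - \alpha) + \sum_c \max(0,\Phi(c))$, summing over children~$c$; a bottom-up induction shows this equals the maximum of $\cnt(X) - |X|\cdot\alpha$ over tree caps $X$ of $T(v)$, which is the relevant quantity for negative changesets. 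Total extra memory is $O(|T|)$.

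To maintain $\Phi$ upon a request at $u$, I first update $\cnt(u)$ and then walk from $u$ up the ancestor chain sharing $u$'s cache state, updating $\Phi$ at each step in $O(1)$ from its previous value and the single child whose value has changed. The walk terminates at a ``top boundary'' node $b$---the root of $T$ or the first ancestor of opposite cache state---and costs $O(h(T))$. By property~(\ref*{lemit:3}) of \lref[Lemma]{lem:no_over-requested_changesets}, no valid changeset is saturated when the round begins, so checking the single condition $\Phi(b) \geq 0$ after the walk is both necessary and sufficient to detect new saturation.

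When detection fires, I extract $X_t$ by a DFS rooted at $b$: for a positive field it is simply $T(b) \setminus C$; for a negative field I include a child $c$ iff $\Phi(c) \geq 0$ (this threshold captures maximality---any excluded $c$ has $\Phi(c) < 0$, so attaching any further node from its subtree would strictly push the excess below zero). After flipping the cache states of all nodes in $X_t$ and zeroing their counters, the potentials are rebuilt in two stages: (i) bottom-up inside $X_t$, where each node recomputes its new-flavor $\Phi$ from up to $\degree(T)$ children, in $O(\degree(T) \cdot |X_t|)$ total time; then (ii) a single upward propagation of the aggregate change from the top of $X_t$ along the newly same-state ancestor chain, costing $O(h(T))$.

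The main obstacle is stage~(i): flipping a node's cache state changes both the semantics of its $\Phi$ and the set of children that contribute to it, and the top boundary above $X_t$ shifts. Establishing that a single bottom-up sweep consistently restores every invariant before the next round, in $O(\degree(T) \cdot |X_t|)$ work, is the technical core; the remainder of the analysis reduces to routine bookkeeping. Summing the $O(h(T))$ per-request walk with the conditional apply yields the claimed $O(h(T) + \max\{h(T),\degree(T)\} \cdot |X_t|)$ operations per round.
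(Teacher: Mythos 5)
Your core data-structure idea matches the paper's, and the negative (eviction) side is essentially the paper's $H_t$/$\val_t$ scheme with the fractional tiebreaker $|A|/(|T|+1)$ replaced by the "include child $c$ iff $\Phi(c)\geq 0$" extraction rule; at the moment of application Lemma~\ref{lem:no_over-requested_changesets} forces $\Phi(c)\leq 0$, so your threshold test is just an alternative (and workable) way to pick out the maximum-cardinality tree cap of zero excess. Your $\Phi$ for non-cached nodes also coincides with the paper's pair $(\cnt_t(P_t(u)),|P_t(u)|)$ collapsed into a single integer.

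The gap is in the \emph{positive} case: the claim that checking the single condition $\Phi(b)\geq 0$ at the top boundary node $b$ is necessary and sufficient is false. For a non-cached $v$, your $\Phi(v)$ is the excess of \emph{one specific} changeset, $P_t(v)=T(v)\setminus C$; there is no $\max$ folded in, because (as you yourself note) the subforest constraint leaves no choice once the root of the cap is fixed. The degree of freedom that remains is the choice of which ancestor of $v_t$ roots the cap, and $\Phi$ is \emph{not} monotone along the ancestor chain. Concretely, with $\alpha=2$, a path $r - u - v$, and an empty cache, after two positive requests at $v$ we have $\Phi(v)=0$, $\Phi(u)=-2$, $\Phi(r)=-4$; your test at $b=r$ reports "no saturation" even though $\{v\}=P_t(v)$ must be fetched. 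Property~\ref{lemit:3} of Lemma~\ref{lem:no_over-requested_changesets} only tells you that the newly saturated set (if any) contains $v_t$; it does not tell you that it is rooted at the top of the same-state chain. The paper handles exactly this by iterating over \emph{all} ancestors $u$ of $v_t$ from the root downward and taking the first (hence largest) saturated $P_t(u)$; this costs $O(h(T))$, which is no more than your upward walk already spends, so the fix is cheap --- but as written your detection step would make \ALG skip mandatory fetches and your implementation would diverge from the algorithm it is supposed to implement.

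Two smaller remarks. First, your stage-(ii) propagation after a positive fetch is actually a nice simplification: because the fetched set has excess exactly zero and its counters are reset, $\Phi$ at the non-cached ancestors of the fetched cap's root is unchanged, so that propagation is essentially free; the paper, which stores $\cnt_t(P_t(u))$ and $|P_t(u)|$ separately, charges $O(h(T)\cdot|X_t|)$ for the same bookkeeping, and both fit the claimed bound. Second, your maximality argument for the negative extraction ("any excluded $c$ has $\Phi(c)<0$, so attaching anything from its subtree pushes the excess negative") is the right idea, but you should state explicitly that by Lemma~\ref{lem:no_over-requested_changesets} the only candidates are tree caps rooted at $b$, which is what reduces "maximal among all valid changesets" to "maximal-cardinality cap with $\Phi=0$" and makes the greedy inclusion correct; the paper sidesteps this by baking the cardinality bonus into $\val_t$, which makes the $\arg\max$ unique by construction.
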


Let $v_t$ be the node requested at round $t$. Note that we may restrict our
attention to requests that entail a~cost for \ALG, as otherwise its counters
remain unchanged and certainly \ALG does not change cache contents. We use
\lref[Lemma]{lem:no_over-requested_changesets} to restrict possible candidates
for changesets that can be applied at time $t$. First, we note that if a~node
$v_t$ requested at round $t$ is outside the cache, then, at time~$t$, \ALG may
only fetch some changeset, and otherwise it may only evict some changeset.
Therefore, we may construct two separate schemes, one governing fetches and
one for evictions.

In \lref[Section]{sec:implementing_positive_counters}, using 
\lref[Lemma]{lem:no_over-requested_changesets}, we show that after processing 
a~positive request, \ALG needs to verify at most $h(T)$ possible positive changesets,
each in constant time, using an auxiliary data
structure. The cost of updating this structure at time $t$ is 
$O(h(T) + h(T) \cdot |X_t|)$.

The situation for negative changesets is more complex as even after applying
\lref[Lemma]{lem:no_over-requested_changesets} there are still exponentially
many valid negative changesets to consider. In
\lref[Section]{sec:implementing_negative_counters}, we construct an~auxiliary
data structure that returns a viable candidate in time $O(h(T) + \degree(T)
\cdot |X_t|)$. The update of this structure at time $t$ can be also done in
$O(h(T) + \degree(T) \cdot |X_t|)$ operations.

\subsection{Positive Requests and Fetches}
\label{sec:implementing_positive_counters}

At any time $t$ and for any non-cached node $u$, we may define $P_t(u)$ as a
tree cap rooted at $u$ containing all non-cached nodes from $T(u)$. During an
execution of \ALG, we maintain two values for each non-cached node~$u$:
$\cnt_t(P_t(u))$ and $|P_t(u)|$. When a counter at node~$v_t$ is incremented, we
update $\cnt_t(P_t(u))$ for each ancestor~$u$ of~$v$ (at most $h(T)$ updated
values). Furthermore, if a node~$v$ changes its state from cached to
non-cached (or vice versa), we update the value of $|P_t(u)|$ for any ancestor $u$
of $v$ (at most $h(T)$ updates per each node that changes the
state). Therefore, the total cost of updating these structures at time $t$ is
at most $O(h(T) + h(T) \cdot |X_t|)$.

By \lref[Lemma]{lem:no_over-requested_changesets}, a positive valid changeset
fetched at time $t$ has to contain $v_t$ and is a single tree cap. Such a~tree
cap has to be equal to $P_t(u)$ for $u$ being an ancestor of $v_t$. 
Hence, we may iterate over all
ancestors $u$ of $v_t$, starting from the tree root and ending at $v_t$, and
we stop at the first node~$u$, for which $P_t(u)$ is saturated (i.e.,
$\cnt_t(P_t(u)) \geq |P_t(u)| \cdot \alpha$). If such a $u$ is found, the
corresponding set $P_t(u)$ satisfies also the maximality condition (cf.~the
definition of \ALG) as all valid changesets that are supersets of $P_t(u)$
were already verified to be non-saturated. Therefore, in such a case, \ALG
fetches~$P_t(u)$. Otherwise, if no saturated changeset is found, \ALG does
nothing. Checking all ancestors of $v_t$ can be performed in time $O(h(T))$.

\subsection{Negative Requests and Evictions}
\label{sec:implementing_negative_counters}

Handling evictions is more complex. If the request to
node $v_t$ at round $t$ was negative,
\lref[Lemma]{lem:no_over-requested_changesets} tells us only that the negative 
changeset evicted by \ALG has to be a tree cap rooted at $u$, where $u$ is the
root of the cached tree containing $v_t$. There are exponentially many such
tree caps, and hence their naïve verification is intractable. To alleviate
this problem, we introduce the following helper notion. For any set of cached
nodes~$A$ and any time $t$, let
\[
  \val_t(A) = \cnt_t(A) - |A| \cdot \alpha + \frac{|A|}{|T|+1}.
\]
Note that for any non-empty set $A$, $\val_t(A) \neq 0$ as the first two terms
are integers and $|A|/(|T|+1) \in (0,1)$. Furthermore, $\val_t$ is additive:
for two disjoint sets $A$ and $B$, $\val_t(A \sqcup B) =
\val_t(A) + \val_t(B)$. For any time~$t$ and a cached node $u$, we define
\begin{align*}
  H_t(u) = \arg \max_D \{ \val_t(D) : &\; \textnormal{$D$ is a non-empty tree cap} \\
    & \quad \textnormal{rooted at $u$} \}.
\end{align*}
Our scheme maintains the value $H_t(u)$ for any cached node $u$. To this end,
we observe that $H_t(u)$ can be defined recursively as follows. Let
$H'_t(u) = H_t(u)$ if $\val_t(H_t(u)) > 0$ and $H'_t(u) = \emptyset$ otherwise.
Then, for any node $v$ and time $t$, by the additivity of $\val_t$, 
\begin{equation*}
\label{eq:h_t_recurrence}
  H_t(u) = \{ u \} \; \sqcup \bigsqcup_\textnormal{$w$ is a child of $u$} H'_t(w).
\end{equation*}
Each cached node $u$ keeps the value $\val_t(H_t(u))$. Note that set $H_t(u)$
itself can be recovered from this information: we iterate over all children of
$u$ (at most $\deg(T)$ of them) and for each child $w$, if $\val_t(H_t(w)) >
0$, we recursively compute set $H_t(w)$. Thus, the total time for constructing
$H_t(u)$ is $O(\deg(T) \cdot |H_t(u)|)$.

During an execution of \ALG, we update stored values accordingly.
That is, whenever a~counter at a cached node $v_t$ is incremented, we update
$\val_t(H_t(u))$ values for each cached ancestor $u$ of $v_t$, starting from
\mbox{$u = v_t$} and proceeding towards the cached tree root. Any such update can be
performed in constant time, and the total time is thus $O(h(T))$. For a~cache
change, we process nodes from the changeset iteratively, starting with nodes
closest to the root in case of an~eviction and furthest from the root in case
of a fetch. For any such node $u$, we appropriately stop or start maintaining
the corresponding value of $\val_t(H_t(u))$. The latter requires looking up the
stored values at all its children. As $u$ does not have cached
ancestors, sets $H_t$ (and hence also the stored values) at other nodes 
remain unchanged. In total, the
cost of updating all $H_t$ values at time $t$ is at most $O(h(T) + \deg(T)
\cdot |X_t|)$.

Finally, we show how to use sets $H_t$ to quickly choose a~valid changeset for
eviction. Recall that for a negative request $v_t$, the changeset to be
evicted has to be a tree cap rooted at $u$, where $u$ is the root of a cached subtree
containing $v_t$. For succinctness, we use $H^u$ to denote $H_t(u)$. We show
that if $\val_t(H^u) < 0$, then there is no valid negative changeset that is
saturated, and hence \ALG does not perform any action, and if $\val_t(H^u) >
0$, then $H^u$ is both saturated and maximal, and hence \ALG may evict~$H^u$.

\begin{enumerate} 
\item First, assume that $\val_t(H^u) < 0$. Then, for any tree cap~$X$ rooted
at~$u$, it holds that $\cnt_t(X) - |X| \cdot \alpha < \val_t(X) \leq
\val_t(H^u) < 0$, i.e., $X$ is not saturated, and hence cannot be evicted by
\ALG.

\item Second, assume that $\val_t(H^u) > 0$. As $\cnt_t(H^u) - |H^u| \cdot
\alpha$ is an integer and $|H^u|/(|T|+1) < 1$, it holds that $\cnt_t(H^u) -
|H^u| \cdot \alpha \geq 0$, i.e., $H^u$ is saturated. Moreover, by
\lref[Lemma]{lem:no_over-requested_changesets}, $\cnt_t(H^u) \leq |H^u| \cdot
\alpha$, and therefore $\cnt_t(H^u) - |H^u| \cdot \alpha = 0$, i.e.,
$\val_t(H^u) = |H^u| / (|T|+1)$. It remains to show that $H^u$ is maximal,
i.e., there is no valid saturated changeset $Y \supsetneq H^u$. By
\lref[Lemma]{lem:no_over-requested_changesets}, $Y$ has to be a tree cap
rooted at $u$ as well. If $Y$ was saturated, $\val_t(Y) = \cnt_t(Y) - |Y|
\cdot \alpha + |Y| / (|T|+1) \geq |Y| / (|T|+1) > |H^u|/(|T|+1) = \val_t(H^u)$, 
which would contradict the definition of~$H^u$.
\end{enumerate}

Note that node $u$ can be found in time $O(h(T))$, and the 
actual set~$H^u$ (of size $|X_t|$) can be computed 
in time $O(\deg(T) \cdot |X_t|)$. Therefore the total time 
for finding set $|X_t|$ is $O(h(T) + \deg(T) \cdot |X_t|)$.

%\subsection{Practical Considerations}
%
%In practical implementations, one may also resort to standard techniques such
%as sampling. Namely, for each request that costs $1$ (that would normally
%entail counter updates), we toss a biased coin. Then, we ignore this request in
%counter updates with probability $(1-1/\sqrt{\alpha})$ and we increment a
%corresponding counter by $\sqrt{\alpha}$ with probability $1/\sqrt{\alpha}$.
%This modification can tremendously speed up the execution of the algorithm,
%while sacrificing the (expected) competitive ratio only by a constant factor.
%Details will be given in the full version of the paper.
%\end{comment}

%%%%%%%%%%%%%%%%%%%%%%%%%%%%%%%%%%%%%%%%%%%%%%%%%%%%%%%%%%%%%%%%%%%%%%%%%%%%%%%%%%%
%%%%%%%%%%%%%%%%%%%%%%%%%%%%%%%%%%%%%%%%%%%%%%%%%%%%%%%%%%%%%%%%%%%%%%%%%%%%%%%%%%%

\section{Conclusions}\label{sec:conclusion}

This paper defines a novel variant of online paging which finds
applications in the context of IP routing networks where forwarding rules can
be cached. We presented a deterministic online algorithm that achieves a
provably competitive trade-off between the benefit of caching and update costs.

It is worth noting that, in the offline setting, choosing the best static cache 
in the presence of only positive requests is known as a~\emph{tree sparsity}
problem and can be solved in $O(|T|^2)$ time~\cite{tree-sparsity}.

We believe that our work opens interesting directions for future research.
Most importantly, it will be interesting to study the optimality of the
derived result; we conjecture that the true competitive ratio does not 
depend on the tree height. In particular, primal-dual approaches that were
successfully applied for other caching
problems~\cite{young-paging-greedy-dual,generalized-caching-optimal,generalized-caching-bansal} may turn out to be useful also for the considered variant.

%%%%%%%%%%%%%%%%%%%%%%%%%%%%%%%%%%%%%%%%%%%%%%%%%%%%%%%%%%%%%%%%%%%%%%%%%%%%%%%%%%%
%%%%%%%%%%%%%%%%%%%%%%%%%%%%%%%%%%%%%%%%%%%%%%%%%%%%%%%%%%%%%%%%%%%%%%%%%%%%%%%%%%%

\section*{Acknowledgements}

The authors would like to thank Fred Baker from
Cisco, Moti Medina from the Max-Planck-Institute and Paweł
Gawrychowski from University of Wrocław for useful inputs.

%%%%%%%%%%%%%%%%%%%%%%%%%%%%%%%%%%%%%%%%%%%%%%%%%%%%%%%%%%%%%%%%%%%%%%%%%%%%%%%%%
%%%%%%%%%%%%%%%%%%%%%%%%%%%%%%%%%%%%%%%%%%%%%%%%%%%%%%%%%%%%%%%%%%%%%%%%%%%%%%%%%

\bibliographystyle{ACM-Reference-Format}
\bibliography{references}  

\appendix

%%%%%%%%%%%%%%%%%%%%%%%%%%%%%%%%%%%%%%%%%%%%%%%%%%%%%%%%%%%%%%%%%%%%%%%%%%%%%%%%%
%%%%%%%%%%%%%%%%%%%%%%%%%%%%%%%%%%%%%%%%%%%%%%%%%%%%%%%%%%%%%%%%%%%%%%%%%%%%%%%%%

\section{Proof of Lemma 5.1}
\label{sec:proof_of_lemma_1}

Before proving \lref[Lemma]{lem:no_over-requested_changesets}, 
we present the following technical claim.

\begin{claim}
\label{cla:inductive_properties}
For any phase $P$, the following invariants hold for any time $t > \beP$:
\begin{enumerate}
\item $\cnt_{t-1}(X) < |X| \cdot \alpha$ for a valid changeset $X$ for $C_t$,\hspace{-1em}
\label{item:prop1}
\item $\cnt_t(X) \leq |X| \cdot \alpha$ for a valid changeset $X$ for $C_t$,
\label{item:prop2}
\item any changeset $X$ with property $\cnt_t(X) = |X| \cdot \alpha$ contains 
the node requested at round $t$.
\label{item:propmid}
\end{enumerate}
\end{claim}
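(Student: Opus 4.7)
The plan is to argue by induction on $t > \beP$. The base case $t = \beP+1$ is direct: the phase begins with an empty cache and zero counters, so property~(1) is vacuous; round $\beP+1$ raises exactly one counter to $1 \leq \alpha \leq |X|\cdot\alpha$, giving property~(2); and any equality $\cnt_{\beP+1}(X) = |X|\cdot\alpha \geq 1$ forces $X$ to contain the unique node with nonzero counter, namely $v_{\beP+1}$, so property~(3) holds.

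For the inductive step I first observe that the claim at $t-1$ already yields \lref[Lemma]{lem:no_over-requested_changesets} at time $t-1$: property~(2) provides the $\leq$ bound; if \ALG applies some $X_{t-1}$ at time $t-1$, then saturation combined with property~(2) forces $\cnt_{t-1}(X_{t-1}) = |X_{t-1}|\cdot\alpha$, while property~(3) forces $v_{t-1} \in X_{t-1}$, and from these two one recovers the ``all maximal saturated changesets at a single time share a common sign'' observation (otherwise two opposite-sign changesets would each have to contain $v_{t-1}$, impossible since $v_{t-1}$ is either inside or outside $C_{t-1}$). I then split the step by whether \ALG acted at time $t-1$. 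If it did not, then $C_t = C_{t-1}$, and property~(1) at $t$ is a contradiction argument: a valid $X$ for $C_{t-1}$ achieving $\cnt_{t-1}(X) = |X|\cdot\alpha$ could be extended within its sign to a maximal saturated valid changeset, which \ALG would have been forced to apply.

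If \ALG applied $X_{t-1}$ at time $t-1$, I decompose every $X$ valid for $C_t$ as $X = (X \setminus X_{t-1}) \sqcup (X \cap X_{t-1})$, noting that nodes in the second piece have just-reset counters. When $X$ shares a sign with $X_{t-1}$, cache constraints force $X \cap X_{t-1} = \emptyset$, and $X \cup X_{t-1}$ is a valid same-sign changeset for $C_{t-1}$; the maximality of $X_{t-1}$ together with $\cnt_{t-1}(X_{t-1}) = |X_{t-1}|\cdot\alpha$ then yields $\cnt_{t-1}(X) < |X|\cdot\alpha$. When the signs differ, either $X \cap X_{t-1} \neq \emptyset$---in which case $|X \setminus X_{t-1}| < |X|$ and property~(2) at $t-1$ applied to $X \setminus X_{t-1}$ (a valid changeset for $C_{t-1}$, as confirmed by a subforest check) gives the strict bound---or $X$ and $X_{t-1}$ are disjoint, and a hypothetical $\cnt_{t-1}(X) = |X|\cdot\alpha$ would, by property~(3) at $t-1$, put $v_{t-1} \in X$, contradicting $v_{t-1} \in X_{t-1}$. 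Once property~(1) at $t$ is established, properties~(2) and~(3) at $t$ follow at once, since round $t$ raises at most one counter by exactly one, and if this promotion yields $\cnt_t(X) = |X|\cdot\alpha$ then the incremented node $v_t$ must belong to~$X$.

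I expect the mixed-sign sub-case to be the main obstacle: verifying that $X \setminus X_{t-1}$ remains a valid changeset for $C_{t-1}$ requires careful subforest-closure bookkeeping, and separating the disjoint and overlapping situations relies on simultaneously invoking the maximality of \ALG's applied changeset and property~(3) of the inductive hypothesis (which pins the currently-requested node inside any saturated valid changeset).
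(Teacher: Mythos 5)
Your proof is correct and follows essentially the same inductive route as the paper's: the base case uses empty counters, invariants (2) and (3) at time $t$ are derived from (1) at time $t$ via the one-request-per-round observation, and (1) at time $t$ is established by contradicting either the maximality of \ALG's applied changeset $X_{t-1}$ (when $X$ and $X_{t-1}$ share a sign, via the valid superset $X \sqcup X_{t-1}$) or properties (2)/(3) of the inductive hypothesis (when signs differ, via $X \setminus X_{t-1}$). The paper organizes the case split by whether \ALG fetched or evicted at time $t-1$ and reduces WLOG to $X$ positive, which after unfolding coincides exactly with your split by sign agreement, so the arguments are the same up to presentation.
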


\begin{proof}
First observe that \lref[Invariant]{item:prop1} (for time $t$) along with the
fact that round $t$ contains only one request immediately implies that
$\cnt_t(X) \leq \cnt_{t-1}(X) + 1 \leq (|X| \cdot \alpha - 1) + 1 = |X|
\cdot \alpha$, i.e., \lref[Invariant]{item:prop2} for time~$t$. Furthermore the equality may
hold only for changesets containing the node requested at round $t$, which
implies \lref[Invariant]{item:propmid} for time $t$.

It remains to show that \lref[Invariant]{item:prop1} holds for any step $t >
\beP$. It is trivially true for $t = \beP+1$ 
as $\cnt_{t-1}(X) = 0$ then. Let $t+1$ be the earliest time in phase $P$ for
which \lref[Invariant]{item:prop1} does not hold; we will then show a
contradiction with the definition of \ALG or a contradiction with other
Invariants at time $t$. That is, we assume that there exists a positive
changeset $X$ for $C_{t+1}$ such that $\cnt_t(X) \geq |X| \cdot \alpha$ (the
proof for a negative changeset is analogous). Note that \ALG must have
performed an action (fetch or eviction) at time $t$ as otherwise $X$ would be
also  a changeset for $C_t = C_{t+1}$ with $\cnt_t(X) \geq |X| \cdot \alpha$,
which means that $X$ should have been applied by \ALG at time $t$. We consider
two cases.

If \ALG fetches a positive changeset $Y$ at time $t$, $C_{t+1} = C_t \sqcup Y$
and $\cnt_t(Y) = |Y|\cdot\alpha$. Then, $Y \sqcup X$ is a changeset for $C_t$,
and $\cnt_t(Y \sqcup X) \geq |Y \sqcup X| \cdot \alpha$. This contradicts
the maximality property of set~$Y$ chosen at time~$t$ by~\ALG.

If \ALG evicts a negative changeset $Y$ at time $t$, $C_{t+1} = C_t \setminus
Y$. \lref[Invariant]{item:prop2} and the definition of \ALG implies $\cnt_t(Y) =
|Y| \cdot \alpha$, and thus, by \lref[Invariant]{item:propmid}, $Y$ contains
the node requested at round $t$. As $X \cap Y \subseteq C_t$, \mbox{$X \cap Y$} does
not have any positive requests at time~$t$, and therefore $\cnt_t(X \setminus
Y) = \cnt_t(X) \geq |X| \cdot \alpha \geq |X \setminus Y| \cdot \alpha$. By
\lref[Invariant]{item:prop2}, $\cnt_t(X \setminus Y) \leq |X \setminus Y|
\cdot \alpha$, and hence $\cnt_t(X \setminus Y) = |X \setminus Y| \cdot
\alpha$. This contradicts \lref[Invariant]{item:propmid} as $X \setminus Y$
cannot contain the node requested at round $t$ (because $Y$ contains this
node).
\end{proof}

\begin{proof}[Proof of Lemma~\ref*{lem:no_over-requested_changesets}]
The inequality $\cnt_t(X) \leq |X| \cdot \alpha$ is equivalent to
\lref[Invariant]{item:prop2} of \lref[Claim]{cla:inductive_properties}.
Assume now that $X$ is applied at time $t$. By the definition of \ALG,
$\cnt_t(X) \geq |X| \cdot \alpha$, and thus $\cnt_t(X) = |X| \cdot \alpha$,
i.e., \lref[Property]{lemit:2} follows. Then, \lref[Invariant]{item:propmid}
of \lref[Claim]{cla:inductive_properties} implies \lref[Property]{lemit:1}.
Finally, \lref[Invariant]{item:prop1} of
\lref[Claim]{cla:inductive_properties} for time $t+1$ is equivalent to
\lref[Property]{lemit:3}.

To show \lref[Property]{lemit:4}, observe that the changeset $X$
applied at time $t$ cannot be a disjoint union of two (or more) valid
changesets $X_1$ and $X_2$. By \lref[Property]{lemit:2}, $|X| \cdot \alpha =
\cnt_t(X) = \cnt_t(X_1) + \cnt_t(X_2)$. If $\cnt_t(X_1) < |X_1| \cdot \alpha$
or $\cnt_t(X_2) < |X_2| \cdot \alpha$, then $\cnt_t(X_1) + \cnt_t(X_2) <
(|X_1| + |X_2|) \cdot \alpha = |X| \cdot \alpha$, a contradiction. Therefore,
$\cnt_t(X_1) = |X_1| \cdot \alpha$ and $\cnt_t(X_2) = |X_2| \cdot
\alpha$. But then \lref[Invariant]{item:propmid} of
\lref[Claim]{cla:inductive_properties} would imply that both $X_1$ and $X_2$
contain a node requested at time $t$, which is a~contradiction as they are
disjoint.

Therefore, if $X$ is a positive changeset applied at $t$, then $X$ is a~single
tree cap of a tree from subforest $C_{t+1}$, and likewise if $X$ is negative,
then $X$ is a~single tree cap of a~tree from subforest $C_t$.
\end{proof}

%%%%%%%%%%%%%%%%%%%%%%%%%%%%%%%%%%%%%%%%%%%%%%%%%%%%%%%%%%%%%%%%%%%%%%%%%%%%%%%%%
%%%%%%%%%%%%%%%%%%%%%%%%%%%%%%%%%%%%%%%%%%%%%%%%%%%%%%%%%%%%%%%%%%%%%%%%%%%%%%%%%

\section{Minimizing Forwarding Tables Using Tree Caching}
\label{sec:bisimulation}

In this section, we present a formal argument showing why we can use any
$q$-competitive online algorithm $A_T$ for the tree caching problem to obtain
a $2 q$-competitive online algorithm~$A$ that minimizes forwarding tables.

Namely, we take any input $I$ for the latter problem and create, in online
fashion, an input $I_T$ for the tree caching problem in a way described in
\lref[Section]{sec:motivation}.
For any solution for $I_T$, we may replay its actions (fetches and evictions)
on~$I$ and vice versa. However, there is one place, where these solutions 
may have different costs. Recall that an update of a~rule stored at node~$v$
in~$I$ is mapped to a \emph{chunk} of $\alpha$ negative requests to~$v$ 
in~$I_T$. It is then possible that an algorithm for $I_T$ modifies the cache
\emph{during} a chunk. An~algorithm that never performs such an action
is called \emph{canonical}.

To alleviate this issue, we first note that any algorithm $B$ for $I_T$ can be
transformed into a canonical solution $B'$ by postponing all cache modifications
that occur during some chunk to the time right after it. Such a transformation
may increase the cost of a~solution on a~chunk at most by $\alpha$ and such an
increase occurs only when $B$ modifies a cache within this chunk. Hence, the
additional cost of transformation can be mapped to the already existing cost of
$B$, and thus the cost of $B'$ is at most by a factor of $2$ larger than that
of $B$.

Furthermore, note that there is a natural cost-pre\-serv\-ing bijection
between solutions to~$I$ and canonical solutions to~$I_T$ (solutions perform
same cache modifications). Hence, the algorithm $A$ for $I$ runs $A_T$
on~$I_T$, transforms it in an online manner into the canonical solution $A'_T(I_T)$, 
and replays its cache modification on $I$. Then, 
$A(I) = \; A'_T(I_T) \leq 2 \cdot A_T(I_T) 
\leq 2 q \cdot \OPT(I_T) \leq 2 q \cdot \OPT(I)$.

The second inequality follows immediately by the $q$-com\-pe\-ti\-ti\-ve\-ness
of $A_T$. The third inequality follows by replaying cache modifications as
well, but this time we take solution $\OPT(I)$ and replay its actions on $I_T$,
creating a canonical (not necessarily optimal) solution of the same cost.

%%%%%%%%%%%%%%%%%%%%%%%%%%%%%%%%%%%%%%%%%%%%%%%%%%%%%%%%%%%%%%%%%%%%%%%%%%%%%%%%%%%
%%%%%%%%%%%%%%%%%%%%%%%%%%%%%%%%%%%%%%%%%%%%%%%%%%%%%%%%%%%%%%%%%%%%%%%%%%%%%%%%%%%

\section{Lower Bound on the Competitive Ratio}
\label{sec:lower-bound-on-the-problem}

\begin{theorem}
For any $\alpha \geq 1$, the competitive ratio of any deterministic online
algorithm for the online tree caching problem is at least
$\Omega(\kALG/(\kALG-\kOPT+1))$
\end{theorem}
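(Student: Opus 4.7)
The plan is a direct reduction from the classical paging lower bound of Sleator and Tarjan, exploiting the fact that on a star-shaped tree the subforest constraint becomes vacuous. Specifically, I take $T$ to be a root $r$ with $\kALG + 1$ leaf children $v_1, \ldots, v_{\kALG+1}$. Any cache of size at most $\kALG$ cannot contain $r$, since that would force all $\kALG+1$ leaves into the cache as well, totalling $\kALG+2 > \kALG$ nodes. Hence both the online and offline caches are arbitrary subsets of the leaves of sizes at most $\kALG$ and $\kOPT$, respectively. On this gadget the tree caching problem collapses to ordinary paging on $\kALG+1$ items, with the only twist that each fetch and each eviction costs $\alpha$ rather than~$1$.

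Given any deterministic online algorithm $A$ with cache $\kALG$, the adversary produces requests in rounds of length $\alpha$. At the start of round $t$, it picks any leaf $u_t$ not in $A$'s current cache (such a leaf exists by pigeonhole) and issues $\alpha$ consecutive positive requests to $u_t$. Regardless of $A$'s actions within the round, $A$ must pay at least $\alpha$: if $A$ never fetches $u_t$, it pays $\alpha$ in bypass costs, while if $A$ does fetch $u_t$ it pays at least $\alpha$ for that fetch alone. Summing over $n$ rounds, the total cost of $A$ on the resulting sequence $\sigma$ is at least $\alpha n$.

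For the offline side, I consider the induced paging sequence $\sigma_P = u_1, \ldots, u_n$ over $\kALG+1$ distinct items. An offline tree caching strategy that simulates Belady's longest-forward-distance rule on $\sigma_P$ with cache size $\kOPT$ pays $0$ when $u_t$ is cached and at most $2\alpha$ per paging fault (one fetch plus at most one later eviction). Hence $\OPT(\sigma) \leq 2\alpha \cdot f$, where $f$ is the number of paging faults incurred by optimal offline paging on $\sigma_P$. The Sleator--Tarjan paging lower bound proof constructs exactly this ``request the missing page'' sequence and establishes $f \leq n(\kALG-\kOPT+1)/\kALG + O(\kALG)$; otherwise the paging competitive ratio $\kALG/(\kALG-\kOPT+1)$ could be beaten on this very sequence. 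Combining, $A(\sigma)/\OPT(\sigma) \geq \kALG/\bigl(2(\kALG-\kOPT+1)\bigr) - o(1) = \Omega(R)$, as desired.

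The sole delicate point is invoking the \emph{pointwise} bound on $f$ for the specific adversary sequence, rather than merely the competitive-ratio statement. This follows either by quoting the Sleator--Tarjan proof, which constructs this very sequence and exhibits an offline strategy of the required cost, or by a direct averaging argument over the $\binom{\kALG+1}{\kOPT}$ offline algorithms, each of which pins a fixed subset of $\kOPT$ pages in cache and evicts according to a fixed rule among the remaining positions. Beyond this known ingredient, the reduction itself is elementary, and no new obstacle arises from the tree structure thanks to the star gadget.
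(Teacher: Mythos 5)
Your proof is correct and follows essentially the same strategy as the paper: reduce to the Sleator--Tarjan paging lower bound by identifying leaves with pages and replacing each paging request with a block of $\alpha$ tree-caching requests. The paper phrases it as a black-box bidirectional reduction on a tree ``whose leaves correspond to the set of all pages'' (with the internal structure explicitly declared irrelevant), while your star gadget with $\kALG+1$ leaves is a clean concrete instantiation of the same idea, with the added observation that the subforest constraint is automatically vacuous since the root cannot fit in either cache.
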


\begin{proof}
We will assume that in the tree caching problem, evictions are free (this
changes the cost by at most by a factor of two). We consider a tree whose
leaves correspond to the set of all pages in the paging problem. The rest of
the tree will be irrelevant.

For any input sequence $I$ for the paging problem, we may create a sequence
$I_\textnormal{T}$ for tree caching, where a request to a page is replaced by
$\alpha$ requests to the corresponding leaf. Now, we claim that any solution
$A$ for~$I$ of cost $c$ can be transformed, in online manner, into a~solution
$A_\textnormal{T}$ for $I_\textnormal{T}$ of cost $\Theta(\alpha \cdot c)$ and
vice versa.

If upon a request $r$, an algorithm $A$ fetches $r$ to the cache and evicts
some pages, then $A_\textnormal{T}$ bypasses $\alpha$ corresponding requests
to leaf $r$, fetches $r$ afterwards and evicts the corresponding leaves,
paying $O(\alpha)$ times the cost of $A$. By doing it iteratively,
$A_\textnormal{T}$ ensures that its cache is equivalent to that of $A$. In
particular, a request free for $A$ is also free for $A_\textnormal{T}$.

Now take any algorithm $A_\textnormal{T}$ for $I_\textnormal{T}$. It can be
transformed to the algorithm $A_\textnormal{T}'$ that (i)~keeps only leaves of
the tree in the cache and (ii) performs actions only at times that are
multiplicities of $\alpha$ (losing at most a constant factor in comparison to
$A_\textnormal{T}$). Then, fix any chunk of $\alpha$ requests to some leaf
$r'$ immediately followed by some fetches and evictions of $A_\textnormal{T}'$
leaves. Upon seeing the corresponding request $r'$ in $I$, the algorithm $A$
performs fetches and evictions on the corresponding pages. In effect, the cost
of $A$ is $O(1/\alpha)$ times the cost of $A_\textnormal{T}$.

The bidirectional reduction described above preserves competitive ratios up to
a constant factor. Hence, applying the adversarial strategy for the paging
problem that enforces the competitive ratio  $R = \kALG/(\kALG-\kOPT+1)$
\cite{competitive-analysis} immediately implies the lower bound of $\Omega(R)$
on the competitive ratio for the tree caching problem.
\end{proof}

\section{Impossibility of Exact Shifting within Positive Fields}
\label{sec:shifting_lower_bound}

In this section, we present an example  showing that, within a positive field,
we cannot shift positive requests down, obtaining $\alpha$ requests in every
node, like we did in the case of negative requests
(cf.~\lref[Corollary]{cor:crucial_lemma_neg}). In our construction, the tree 
$T$~consists of root $r$ and
two distinct subtrees $T_1$ and $T_2$, each of size $s$ and containing $\ell$
leaves.

\balance

Suppose that, at the beginning, \ALG has the entire tree~$T$ in its cache and
the following ordered events happen (cf.~\lref[Figure]{fig:trbl_exmpl}).
\begin{enumerate}
\item \ALG evicts $T_1 \cup \{ r \}$ from the cache.
\item $(s+1) \cdot \alpha - \ell$ requests appear one by one at $r$. The number of
  requests is too small to trigger a fetch of any subtree of $T_1 \cup \{ r \}$.
\item \ALG evicts $T_2$ from the cache.
\item $s \cdot \alpha$ requests appear one by one at the root of $T_1$. This
  time, the number of requests is too small to trigger a fetch of any
  subtree of $T$.
\item $\ell$ requests appear one by one at $r$. After the last one appears,
  {\ALG} fetches the entire $T$ to the cache.
\end{enumerate}
The evictions happen because of some feasible sequence of negative requests that
is irrelevant from our perspective.

\begin{figure}[t]
  \centering
  \includegraphics[width=0.9\columnwidth,keepaspectratio]{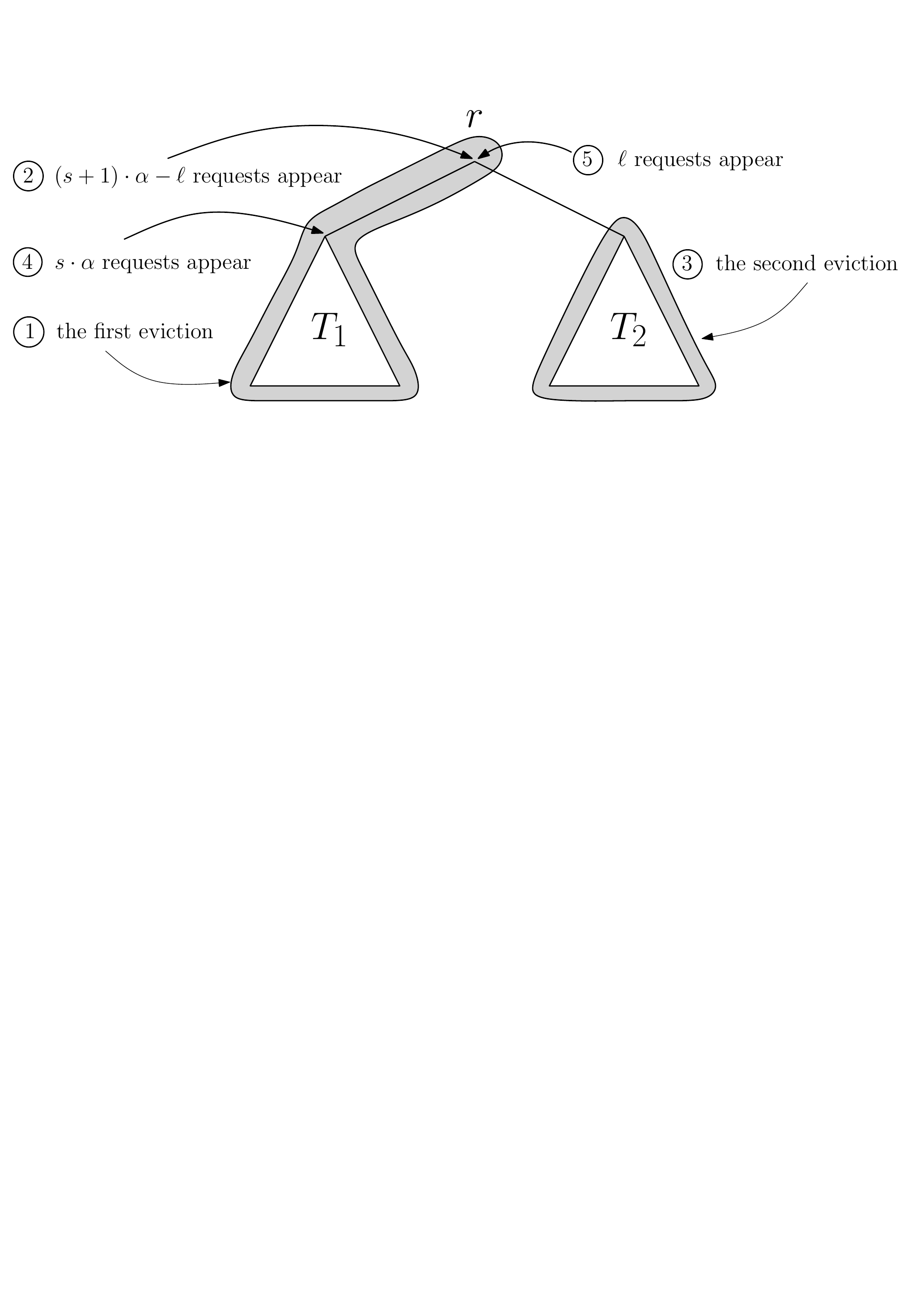}
  \caption{A troublesome example of a positive field. Numbers in circles describe 
  the chronology of the events.}
  \label{fig:trbl_exmpl}
\end{figure}

Now, observe that when requests appear at the root in the second stage of our
construction, $T_2$ is still in the cache (i.e., does not belong to the field
yet). Thus, all the requests, except for the last $\ell$ ones can be shifted
down only to nodes from $T_1$. Hence, for large $\alpha$ and~$s$, shifting can
deliver $\Omega(\alpha)$ requests only to half of the nodes.

\end{document}